\documentclass[10pt]{article}
\usepackage{authblk}
\usepackage{graphicx}
\usepackage{subcaption}
\usepackage{physics}
\usepackage{amsmath}
\usepackage{amssymb}
\usepackage{amsthm}
\usepackage[
    colorlinks=true,
    citecolor=blue,
    linkcolor=black,
    urlcolor=black
]{hyperref}
\usepackage{cleveref}
\usepackage{enumitem}
\usepackage{algorithm}
\usepackage{algpseudocode}
\usepackage{caption}
\usepackage[table]{xcolor}

\newcommand{\copymajorsum}{\mathbin{\boxplus}}

\algrenewcommand\algorithmicrequire{\textbf{Input:}}
\algrenewcommand\algorithmicensure{\textbf{Output:}}

\newcommand{\algstep}[2]{%
    \State \parbox[t]{\dimexpr\linewidth-\algorithmicindent\relax}{%
        \hangindent=2em
        \hangafter=1
        \textbf{#1} #2%
    }%
}

\newcommand{\algstepind}[2]{%
    \State \parbox[t]{\dimexpr\linewidth-\algorithmicindent\relax}{%
        \hangindent=4em
        \hangafter=1
        \quad \textbf{#1} #2%
    }%
}

\newtheorem{theorem}{Theorem}
\newtheorem{definition}{Definition}
\newtheorem{lemma}{Lemma}

\newtheorem{remark}{Remark}
\newtheorem{example}{Example}

\crefname{section}{Section}{Sections}
\crefname{appendix}{Appendix}{Appendices}
\crefname{equation}{Eqn.}{Eqns.}
\crefname{figure}{Figure}{Figures}
\crefname{theorem}{Theorem}{Theorems}
\crefname{lemma}{Lemma}{Lemmas}
\crefname{definition}{Definition}{Definitions}
\crefname{corollary}{Corollary}{Corollaries}
\crefname{remark}{Remark}{Remarks}
\crefname{example}{Example}{Examples}
\crefname{algorithm}{Algorithm}{Algorithms}

\usepackage[
    backend=biber,
    style=numeric-comp,
    sorting=none,
    doi=false,
    isbn=false,
    url=false
]{biblatex}

\newcommand{\bluecitebrackets}[1]{%
    \textcolor{blue}{\mkbibbrackets{#1}}%
}

\DeclareCiteCommand{\cite}[\bluecitebrackets]
    {\usebibmacro{cite:init}%
    \usebibmacro{prenote}}
    {\usebibmacro{citeindex}%
    \usebibmacro{cite:comp}}
    {}
    {\usebibmacro{cite:dump}%
    \usebibmacro{postnote}}

\usepackage{geometry}
\title{Distillation of \(N\)-Qubit Stabilizer States on a Star Network Topology}

\author[1]{Theodore M. Mahaffey}
\author[2]{Chaohan Cui}
\author[2]{Saikat Guha}
\author[3]{Murphy Yuezhen Niu}

\affil[1]{Department of Physics, University of California, Santa Barbara, CA 93106, USA}
\affil[2]{Department of Electrical and Computer Engineering, University of Maryland, College Park, MD 20742, USA}
\affil[3]{Department of Computer Science, University of California, Santa Barbara, CA 93106, USA}

\date{}

\renewenvironment{abstract}
    {\small}
    {}

\begin{document}

\maketitle

\begin{center}
\begin{minipage}{0.85\textwidth}
\begin{abstract}

We introduce an entanglement distillation protocol that utilizes an arbitrary $[[n,k,d]]$ stabilizer code to convert $n$ raw copies of an $N$-qubit Greenberger-Horne-Zeilinger (GHZ) state into $k$ logical copies in the presence of Pauli noise. Our explicit formulation of the scheme on a star network topology is efficiently scalable to arbitrary $N$, requires only local operations, and minimizes the impact of errors on idle physical qubits, enabling the distribution of high-fidelity entangled logical states between any number of quantum processors. We report the results of numerical experiments using the 5-qubit code, toric code, and $[[144,12,12]]$ bivariate bicyle code to protect against independent single-qubit depolarizing noise on all qubits. Our encoding scheme improves the final GHZ state fidelity relative to a single bare GHZ state sent over the same noisy channel for depolarizing rates up to or above $6 \%$ for all codes tested. Additionally, we use the stabilizer formalism to prove that our protocol can be applied to any $N$-qubit Calderbank-Shor-Steane (CSS) stabilizer state, from which the GHZ state emerges as a special case. This result explicitly establishes the working principle of previous stabilizer-based Bell pair and GHZ state distillation schemes, and generalizes such protocols beyond the traditionally considered states. We also show that some stabilizer codes can distill any stabilizer state with our protocol.

\end{abstract}
\end{minipage}
\end{center}
\vspace{0.5em}

\section{Introduction}
\label{sec:intro}

Distributing high-fidelity, multipartite entangled states across spatially separated processors is a crucial task that a modular quantum computing architecture must achieve. To combat the challenge posed by inevitable noise in Bell state preparation, entanglement purification and entanglement distillation have been studied in the two-qubit setting for decades \cite{aschauer2004_Quantum, bennett1996_Mixed, maneva2000_Improved, dur2007_Entanglement}. In general, such schemes convert $n$ noisy Bell pairs into $k < n$ higher fidelity Bell pairs. While they are useful for many applications, Bell pairs are limited by the basic fact that they involve only two qubits. To perform non-trivial computations and deploy distributed quantum error correction \cite{clayton2025_Distributed} on a modular quantum computer, it is necessary to distribute entangled states across many ($N > 2$) nodes. For example, in a realization of the surface code on a distributed architecture where the physical qubits of one code block lie in many spatially separated processors, weight 4 stabilizer measurements can be performed by a fault-tolerant syndrome extraction circuit that uses an ancillary 4-qubit Greenberger-Horne-Zeilinger (GHZ) state as a resource state \cite{singh2025_Modular}.

There exist schemes that build an $N$-qubit GHZ state using Bell pairs as the initial resource state, but they require $N-1$ consecutive successful Bell state measurements \cite{bartolucci2021_Creation}. The success probability of the standard method for an individual Bell state measurement is upper bounded by $50$\%, \cite{calsamiglia2001_Maximum}, while boosted Bell state measurement schemes can achieve success rates of $75\%$ \cite{ewert2014_3,hauser2025_Boosted}. In either case, the efficiency of $N$-qubit GHZ state preparation with this scheme is exponentially limited by the non-unit efficiency of Bell state measurements, which in turn limits the clock rate of a quantum network or modular quantum computer relying on distributed GHZ states. Motivated by this, and by experimental demonstrations of the preparation and distribution of GHZ states \cite{li2020_Multipartite,luo2015_Generations,zhao2004_Experimental,mccutcheon2016_Experimental}, we devise multipartite entanglement distillation protocols that directly use GHZ states as the initial resource state. We are inspired by the scheme of Rengaswamy et al., which uses an $[[n,k,d]]$ stabilizer code to distill $n$ GHZ states into $k$ GHZ states in the presence of Pauli noise \cite{rengaswamy2022_Distilling,rengaswamy2024_Entanglement}. The first variant of the scheme is explicitly formulated for $3$-qubit GHZ states, and it assumes that the nodes receiving the qubits of the GHZ states lie on a linear network topology, relying on high-weight stabilizer measurements of an induced $[[2n,k]]$ code as an intermediate step \cite{rengaswamy2022_Distilling}. Distillation of $N$-qubit GHZ states with the same machinery is outlined in \cite{rengaswamy2024_Entanglement}, and the generalization to arbitrary networks is discussed at a high level.

In this work, we use the stabilizer formalism to develop a similar scheme on a star network topology which naturally allows any number of parties, relies exclusively on local operations and classical communication (LOCC), and ensures that the code distance is the same for all parties. We reinterpret the derivation of the scheme proposed by Rengaswamy et al. through the lens of the stabilizer formalism in \cref{sec:ghz_map}. We provide numerical results from our scheme using various quantum error correction (QEC) codes, including the 5-qubit code, the toric code \cite{kitaev2003_Faulttolerant}, and the $[[144,12,12]]$ Bivariate Bicycle code -- known as the Gross Code \cite{bravyi2024_Highthreshold}. We derive a lower bound on the final GHZ state fidelity from the probability of successful QEC by all parties, and we show how results for any $N$ can be efficiently obtained from the results of single-party decoder Monte Carlo simulations. Finally, we use the stabilizer formalism to show that our protocol is not limited to GHZ states. In particular, we prove a sufficient condition on the logical operators of the stabilizer code that allows $n$ copies of \textit{any} $N$-qubit stabilizer state to be distilled to $k$ copies of the same stabilizer state by our protocol. We also show that a significantly weaker sufficient condition, satisfied by all commonly used codes, allows for distillation of any CSS stabilizer state with our protocol.

\section{Background and Notation}
\label{sec:bkgd}

\subsection{Pauli Operators}
\label{sec:bkgd_pauli}

Throughout the paper, we will work with elements of the $m$-qubit Pauli group $\mathcal{P}_m$, which are built from tensor products of the single qubit Pauli operators.

\begin{equation}
    \label{eqn:single_q_paulis}
    I = \begin{pmatrix}
        1 & 0 \\
        0 & 1
    \end{pmatrix}, \quad X = \begin{pmatrix}
        0 & 1 \\
        1 & 0
    \end{pmatrix}, \quad Z = \begin{pmatrix}
        1 & 0, \\
        0 & -1
    \end{pmatrix}, \quad Y = i XZ  = \begin{pmatrix}
        0 & -i \\
        i & 0
    \end{pmatrix}
\end{equation}

\begin{equation}
    \label{eqn:pauli_group}
    \mathcal{P}_m = \{ i^{\kappa} E_1 \otimes E_2 \otimes ... \otimes E_m : E_j \in \{I, X, Y, Z \} \ \kappa \in \{0, 1, 2, 3 \} \}
\end{equation}

A Pauli operator with no phase, corresponding to $\kappa = 0$ in \cref{eqn:pauli_group}, is Hermitian and unitary. Let $\mathcal{Q}_m$ denote the subset of the Pauli group containing all such operators. $\mathcal{Q}_m$ is not a group because it is not closed under multiplication.

\begin{equation}
    \label{eqn:pauli_no_phase_subset}
    \mathcal{Q}_m = \{ E_1 \otimes E_2 \otimes ... \otimes E_m : E_j \in \{I, X, Y, Z \} \} 
\end{equation}

We can write the operators in $\mathcal{Q}_m$ as $E(a, b)$, where $a = [a_1, a_2, ..., a_m], b = [b_1, b_2, ..., b_m] \in \mathbb{F}_2^m$ are two binary row vectors representing the support of $X$ and $Z$ respectively.

\begin{equation}
    \label{eqn:E_ab_def}
    E(a, b) := (i^{a_1 b_1} X^{a_1} Z^{b_1} ) \otimes (i^{a_2 b_2} X^{a_2} Z^{b_2} ) \otimes ... (i^{a_m b_m} X^{a_m} Z^{b_m} ) = i^{a b^T} \bigotimes_{j=1}^m (X^{a_j} Z^{a_j})
\end{equation}

For example, $E([0, 1, 1, 0], [0, 0, 1, 1]) = I \otimes X \otimes Y \otimes Z = I_1 X_2 Y_3 Z_4 = IXYZ$. \cref{lem:E_ab_properties} gives some properties of the $E(a,b)$ notation that are straightforward to show.

\begin{lemma}
\label{lem:E_ab_properties}
Useful properties of the $E(a, b)$ notation for Pauli operators.
    \begin{enumerate} [label=(\alph*), leftmargin=2em]
        \item Transpose: $E(a, b)^T = (-1)^{ab^T} E(a, b)$
        \item Product: $E(a,b) \cdot E(c,d) =  i^{bc^T - ad^T} E(a + c, b + d)$ where the sums $a+c$ and $c+d$ are performed modulo 4 and \cref{eqn:E_ab_def} is extended directly to $a, b \in \mathbb{F}_4^m$.
        \item Commutation: $E(a,b) \cdot E(c,d) = (-1)^{\langle [a, b], [c, d] \rangle_s} E(c,d) \cdot E(a,b)$, where 

        \begin{equation*}
            \langle [a, b], [c, d] \rangle_s := ad^T + bc^T \ \text{(mod 2)}
        \end{equation*}

        is called the symplectic inner product between $[a, b]$ and $[c, d]$ in $\mathbb{F}_2^{2m}$.
    \end{enumerate}
\end{lemma}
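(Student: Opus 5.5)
The plan is to reduce everything to a single-qubit computation. Since $E(a,b)$ is a tensor product of single-qubit factors with the global phase $i^{ab^T} = \prod_j i^{a_j b_j}$, transposes and products act factorwise, and the resulting phases multiply. I read \cref{eqn:E_ab_def} with the intended factors $X^{a_j}Z^{b_j}$; the second $a_j$ exponent there is a typo. It then suffices to prove each identity for $m=1$ with scalars $a,b,c,d$, and to recover the general case by multiplying over $j$ and summing exponents. Those sums give exactly $ab^T$, $bc^T - ad^T$ and $ad^T + bc^T$.

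For (a), I will use $X^T = X$ and $Z^T = Z$. For one qubit, $(i^{ab}X^aZ^b)^T = i^{ab} Z^b X^a$. If $ab=0$, one factor is trivial and the operators commute. If $a=b=1$, then $ZX = -XZ$. In both cases $Z^bX^a = (-1)^{ab}X^aZ^b$, which gives $E(a,b)^T = (-1)^{ab}E(a,b)$. Taking the tensor product yields the sign $(-1)^{\sum_j a_jb_j} = (-1)^{ab^T}$.

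For (b), I will expand one qubit directly:
\begin{equation*}
i^{ab}X^aZ^b \, i^{cd}X^cZ^d = i^{ab+cd} X^a (Z^bX^c) Z^d = i^{ab+cd}(-1)^{bc} X^{a+c}Z^{b+d}.
\end{equation*}
On the right-hand side, the extended definition gives $i^{bc-ad}\, i^{(a+c)(b+d)} X^{a+c}Z^{b+d}$. Expanding the exponent gives $bc - ad + ab + ad + bc + cd = ab + cd + 2bc$, and $i^{2bc} = (-1)^{bc}$, so the two sides agree.

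The only subtle point, and the step I expect to need the most care, is the meaning of the extension to $\mathbb{F}_4$ entries. I will interpret $a+c$ and $b+d$ as integer sums reduced mod 4. This is consistent because $X^2 = Z^2 = I$, so $X^{a+c}$ depends only on the parity of $a+c$. The phase $i^{(a+c)(b+d)}$ is well defined mod 4 because $i^4 = 1$ and a shift of $a+c$ by 4 changes the exponent by a multiple of 4. I will also use the fact that the exponents $bc - ad$ are likewise taken as integers mod 4.

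For (c), I will apply (b) in both orders: $E(a,b)E(c,d) = i^{bc^T-ad^T}E(a+c,b+d)$ and $E(c,d)E(a,b) = i^{da^T - cb^T}E(a+c,b+d)$, using $cb^T = bc^T$ and $da^T = ad^T$. The ratio of the two phases is $i^{2(bc^T - ad^T)} = (-1)^{bc^T - ad^T} = (-1)^{ad^T + bc^T}$. This is $(-1)^{\langle [a,b],[c,d]\rangle_s}$, and only the parity of the exponent matters.
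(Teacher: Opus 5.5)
Your proposal is correct. The paper gives no proof of this lemma and only calls the properties straightforward; your factorwise single-qubit verification of (a) and (b), with (c) obtained by applying (b) in both orders, is exactly the routine check it has in mind, and you correctly resolve the paper's typos ($Z^{a_j}$ should be $Z^{b_j}$ in the definition, $c+d$ should be $b+d$, and $\mathbb{F}_4$ really means integers mod $4$).
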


To label Pauli operators whose support spans multiple subsystems each with $n$ qubits, we concatenate bitstrings of length $n$ as $[x, y] = [x_1, ..., x_n, y_1, ..., y_n] \in \mathbb{F}_2^{2n}$. Concatenation of 3 bitstrings is written as $[x,y,z]$, and so on. Within this context, $0$ is taken to be the zero vector of length $n$. This concatenation convention is the reason for the party-major indexing discussed below in \cref{sec:bkgd_parties_copies}. If an operator is local to a single subsystem $A_i$, we can also write it as $E(a,b)_{A_i}$.

\subsection{Stabilizer Codes}
\label{sec:bkgd_stab_codes}

Stabilizer codes are a widely studied class of qubit QEC codes \cite{gottesman1997_Stabilizer}. A stabilizer code encoding $k$ logical qubits into the collective state of $n$ physical qubits is described by a stabilizer group $\mathcal{C}$ generated by $r := n - k$ independent, mutually commuting Pauli operators $c_1, c_2, ..., c_r \in \mathcal{P}_n$ that we will refer to as ``code generators". The codespace of the code specified by $\mathcal{C}$ is the simultaneous $+1$ eigenspace of all the code generators. The stabilizer group $\mathcal{C}$ can be represented compactly by its parity check matrix $H_{\mathcal{C}}$, whose $r$ rows specify the code generators as shown in \cref{eqn:gen_from_row}.

\begin{equation}
    \label{eqn:code_parity_check}
    H_{\mathcal{C}} = \left( \begin{array}{c|c|c}
        X & Z & \alpha
    \end{array} \right) \in \mathbb{F}_2^{r \times (2n + 1)}
\end{equation}

$X$ and $Z$ here are $r \times n$ binary matrices specifying the qubits on which Pauli X and Z have non-trivial support, and $\alpha \in \mathbb{F}_2^{r}$ is a column vector of ``sign bits". Let $x_i$ denote the $i$th row of $X$, and $z_i$ similar. Then the generator $c_i$ is obtained from its row in $H_{\mathcal{C}}$ as below.

\begin{equation}
    \label{eqn:gen_from_row}
    c_i = (-1)^{\alpha_i} E(x_i, z_i)
\end{equation}

Commutativity of the generators is ensured by the relation $XZ^T + ZX^T = 0$ (mod 2), and the generators are independent if and only if $\text{rank} ( X | Z ) = r$.

The logical operators of a stabilizer code are Pauli operators on the $n$ physical qubits that commute with all of the code generators. They are members of $N(\mathcal{C})$, the normalizer of $\mathcal{C}$ in $\mathcal{P}_n$. The logical Pauli group of the code is the quotient group $N(\mathcal{C}) / \mathcal{C} \cong \mathcal{P}_k$. In \cref{eqn:logical_mapping}, we introduce the mapping $\pi_{\mathcal{C}}$ from a physical operation $g \in N(\mathcal{C})$ to the coset in $N(\mathcal{C}) / \mathcal{C}$ that contains $g$, and therefore the realized logical Pauli by isomorphism.

\begin{equation}
    \label{eqn:logical_mapping}
    \pi_{\mathcal{C}} : N(\mathcal{C}) \rightarrow N(\mathcal{C}) / \mathcal{C} \cong \mathcal{P}_{k}
\end{equation}

Logical operators are typically denoted with a bar. For example $\overline{X_1} = X^{\otimes 5} \in \mathcal{P}_5$ is the operator that acts on the physical qubits of the 5-qubit code to enact Pauli X on the 1st (and only) logical qubit. Considering this definition, we see that the mapping $\pi_{\mathcal{C}}$ removes the bar, i.e. $\pi_{\mathcal{C}} (\overline{X_1}) = X_1 \in \mathcal{P}_1$. We can extend this to an arbitrary code with $k$ logical qubits using the $E(a,b)$ notation.

\begin{equation}
    \label{eqn:logical_mapping_Eab}
    \pi_{\mathcal{C}} \left( \overline{E(a,b)} \right) = E(a,b) \quad \quad a, b \in \mathbb{F}_2^k
\end{equation}

\subsection{Stabilizer States}
\label{sec:bkgd_stab_states}

An $N$-qubit stabilizer state is the simultaneous $+1$ eigenstate of exactly $N$ mutually commuting, independent $N$-qubit Pauli operators \cite{gottesman1998_Heisenberg}. Describing a state in this manner can be interpreted as defining a stabilizer code with $k=0$ logical qubits so that the $2^k$-dimensional code space specifies a single ray in the $N$-qubit Hilbert space.

In this paper, we write the stabilizer group specifying a general $N$-qubit stabilizer state as $\mathcal{S} = \langle s_1, s_2, ... , s_N \rangle$, and we refer to the $s_i \in \mathcal{P}_N$ as ``state generators". We can again work with $\mathcal{S}$ via its parity check matrix $H_{\mathcal{S}} \in \mathbb{F}_2^{N \times (2N+1)}$. Extending the notation of the previous section, the submatrices $A$ and $B$ are $N \times N$ binary matrices, and $\tau$ is a length-$N$ vector of sign bits.

\begin{equation}
    \label{eqn:state_parity_check}
    H_{\mathcal{S}} = \left( \begin{array}{c|c|c}
        A & B & \tau
    \end{array} \right)
\end{equation}

\begin{equation}
    \label{eqn:state_gen_from_row}
    s_i = (-1)^{\tau} E(a_i, b_i)
\end{equation}

Again commutativity of the state generators is satisfied by ensuring $AB^T + BA^T = 0$, and independence is guaranteed by $\text{rank} ( A|B ) = N$.

When a stabilizer state undergoes a unitary evolution or a measurement of a Pauli operator, the resulting change in the state can be tracked by updating the state generators $s_1, s_2, ..., s_N$ according to well-known rules \cite{gottesman1998_Heisenberg}, which we give in \cref{sec:stab_rules} for completeness.

GHZ states are stabilizer states, and the distillation scheme developed in this work relies on Pauli measurements and Pauli operations (noise and correction). The stabilizer formalism therefore provides physical insight into the collective state shared by the parties in the network, as well as an efficient method of numerically simulating the evolution of the state throughout the scheme. 

\subsection{Parties and Copies}
\label{sec:bkgd_parties_copies}

Throughout this paper, we imagine a setting where there are $N$ parties (or nodes, or subsystems) that each have $n$ qubits in some collective stabilizer state. We label the parties as $A_1, A_2, ..., A_N$ in general, but in the case of $N=3$ we adopt the convention of referring to $A_1$ as Alice, $A_2$ as Bob, and $A_3$ as Charlie. We can track the evolution of the collective state via its parity check matrix $H_{tot} \in \mathbb{F}_2^{nN \times (2nN + 1)}$ (and the stabilizer group $\mathcal{S}_{tot}$ that it generates). To be concrete, we adopt a party-major indexing for the rows of $H_{tot}$, so that columns $1-n$ of the X and Z support submatrices refer to the $A_1$ qubits, columns $(n+1)-(2n)$ to the $A_2$ qubits, and so on.

There are two types of collective states that deserve a shorthand notation. Consider the situation in which each party's $n$ qubits are in a codespace of $\mathcal{C}$ specified by the sign bits $\alpha^{(i)} \in \mathbb{F}_2^r$. Then, with this indexing convention, the parity check matrix for the overall code has a direct sum structure.

\begin{equation}
    \label{eqn:party_direct_sum}
    H_{\mathcal{C}}^{\oplus N}
    :=
    \left(
    \begin{array}{cccc|cccc|c}
        X &   &        &   & Z &   &        &   & \alpha^{(1)} \\
            & X &        &   &   & Z &        &   & \alpha^{(2)} \\
            &   & \ddots &   &   &   & \ddots &   & \vdots \\
            &   &        & X &   &   &        & Z & \alpha^{(N)}
    \end{array}
    \right)
    \in \mathbb{F}_2^{rN \times (2nN+1)} .
\end{equation}

We adopt the shorthand that $\mathcal{C}^{\oplus N}$ is the stabilizer group on $nN$ qubits that is generated by the rows of $H_{\mathcal{C}}^{\oplus N}$. You will notice that $H_{\mathcal{C}}^{\oplus N}$ is $kN$ rows short of specifying a single stabilizer state on the $nN$ qubits because the $k$ logical qubits at each party are unconstrained. We can add $kN$ more generators to arrive at a stabilizer state described by some $H_{tot}$. Since all of the rows of $H_{tot}$ must produce mutually commuting and independent Pauli operators, each of the $kN$ additional rows must represent some tensor product of logical operators on each party. Introducing $H_{\mathcal{L}} \in \mathbb{F}_2^{kN \times (2nN +1)}$ such that its $kN$ rows specify these products of logical operators, we say that $H_{tot} = H_{\mathcal{C}}^{\oplus N} \cup H_{\mathcal{L}}$ is the parity check matrix for this logical stabilizer state. We use $\cup$ to denote appending the rows of $H_{\mathcal{L}}$ to the bottom of $H_{\mathcal{C}}^{\oplus N}$ because we are effectively taking the union of two sets of generators. This operation only produces a well-defined stabilizer group when the matrices $H_{\mathcal{C}}^{\oplus N}$ and $H_{\mathcal{L}}$ have the same number of columns, and their rows produce mutually commuting Pauli operators.

\begin{figure}[h]
    \centering
    \includegraphics[width=0.5\linewidth]{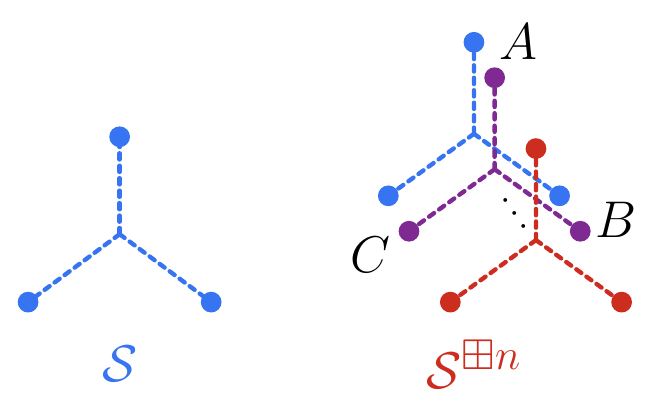}
    \caption{On the left, an arbitrary 3-qubit stabilizer state given by the stabilizer group $\mathcal{S}$ is depicted. The dots represent the qubits. On the right, we show how $n$ independent copies of this state can be arranged such that three parties (Alice, Bob, and Charlie) each have $n$ qubits, with one qubit from each copy. The overall stabilizer group for the $n$ copies is $\mathcal{S}^{\copymajorsum n}$.}
    \label{fig:stab_state_ncopies}
\end{figure}

Now consider the situation in which there are $n$ copies of the $N$-qubit stabilizer state $\mathcal{S}$, and we arrange the qubits such that $A_j$ has the $j$th qubit from each copy. This arrangement is depicted for $N=3$ in \cref{fig:stab_state_ncopies}. Since the $n$ copies are independent of each other, the parity check matrix $H_{tot}$ again has a direct sum structure, but with respect to a copy-major indexing. We introduce the symbol $\copymajorsum$ to denote this copy-major direct sum, and we use $\mathcal{S}^{\copymajorsum n}$ to refer to the stabilizer group (equivalently, the stabilizer state) specified by $H_{\mathcal{S}}^{\copymajorsum n}$.

\begin{example}
    Clarifying the party-major and copy-major direct sums with an explicit $n=N=2$ example.
\end{example}

Consider an $N=2$-qubit stabilizer state with the parity check matrix in \cref{eqn:state_parity_check}. We explicitly enumerate the entries of the $A$ and $B$ submatrices, and we suppose that the vector of sign bits is $\tau = 0$.

\begin{equation*}
    A = \begin{pmatrix}
        a_{11} & a_{12} \\
        a_{21} & a_{22}
    \end{pmatrix} \quad \quad \text{and} \quad \quad B = \begin{pmatrix}
        b_{11} & b_{12} \\
        b_{21} & b_{22}
    \end{pmatrix}
\end{equation*}

To describe the situation where there are $n=2$ copies of the state arranged such that $A_1$ has the first qubit from each copy and $A_2$ has the second, we use the copy-major direct sum $\copymajorsum$ to construct the overall parity check matrix for the $nN=4$-qubit stabilizer state, which we write down in equation \cref{eqn:copy_major_example}. The first two columns of each $4 \times 4$ submatrix refer to the $A_1$ qubits, and the second two columns to the $A_2$ qubits.

\begin{equation}
    \label{eqn:copy_major_example}
    H_{\mathcal{S}}^{\copymajorsum 2} = \left( \begin{array}{cccc|cccc|c}
        a_{11} & 0 & a_{12} & 0 & b_{11} & 0 & b_{12} & 0 & 0 \\
        0 & a_{11} & 0 & a_{12} & 0 & b_{11} & 0 & b_{12} & 0 \\
        a_{21} & 0 & a_{22} & 0 & b_{21} & 0 & b_{22} & 0 & 0 \\
        0 & a_{21} & 0 & a_{22} & 0 & b_{21} & 0 & b_{22} & 0
    \end{array} \right)
\end{equation}

Meanwhile, the party-major direct sum $\oplus$ describes the alternative situation in which the 2 qubits of $A_1$ are in the state $\mathcal{S}$, and the 2 qubits of $A_2$ are also (separately) in the state $\mathcal{S}$. This looks like a standard direct sum, since we use a party-major indexing convention when we explicitly write down the parity check matrix for a multipartite stabilizer state.

\begin{equation}
    \label{eqn:party_major_example}
    H_{\mathcal{S}}^{\oplus 2} = \left( \begin{array}{cccc|cccc|c}
        a_{11} & a_{12} & 0 & 0 & b_{11} & b_{12} & 0 & 0 & 0 \\
        a_{21} & a_{22} & 0 & 0 & b_{21} & b_{22} & 0 & 0 & 0 \\
        0 & 0 & a_{11} & a_{12} & 0 & 0 & b_{11} & b_{12} & 0 \\
        0 & 0 & a_{21} & a_{22} & 0 & 0 & b_{21} & b_{22} & 0
    \end{array} \right)
\end{equation}

\subsection{GHZ States}
\label{sec:bkgd_ghz}

The $N$-qubit GHZ state shared between parties $A_1, A_2, ..., A_N$ is

\begin{equation}
    \label{eqn:ghz_state}
    \ket{\text{GHZ}^{(N)}} = \frac{\ket{0}_{A_1} \ket{0}_{A_2} ... \ket{0}_{A_N} + \ket{1}_{A_1} \ket{1}_{A_2} ... \ket{1}_{A_N}}{\sqrt{2}}.
\end{equation}

This state is a stabilizer state, and we will denote its stabilizer group as $\mathcal{G}_N$ throughout the paper. One of its generators is a global $XX...X$ operator, and the remaining $(N-1)$ generators are $ZZ$ operators on adjacent qubits. The parity check matrix for $\mathcal{G}_4$ is given below as an example.

\begin{equation}
    \label{eqn:ghz_parity_check}
    H_{\mathcal{G}_4} = \left( \begin{array}{cccc|cccc|c}
        1 & 1 & 1 & 1 & 0 & 0 & 0 & 0 & 0 \\
        0 & 0 & 0 & 0 & 1 & 1 & 0 & 0 & 0 \\
        0 & 0 & 0 & 0 & 0 & 1 & 1 & 0 & 0 \\
        0 & 0 & 0 & 0 & 0 & 0 & 1 & 1 & 0 \\
    \end{array} \right)
\end{equation}

We will be working with $n$ copies of the GHZ state, i.e. $\ket{\text{GHZ}_n^{(N)}} := \ket{\text{GHZ}^{(N)}}^{\otimes n}$. This state is described in our notation by the stabilizer group $\mathcal{G}_N^{\copymajorsum n}$. It will sometimes be helpful to explicitly refer to the Pauli operators that generate this group. There are $n$ independent sets of generators corresponding to each copy, which we index by $j$ in \cref{eqn:main_ghz_stab_Nn} below.

\begin{equation}
\label{eqn:main_ghz_stab_Nn}
    \mathcal{G}_N^{\copymajorsum n} = \langle Z_{(A_1)j} Z_{(A_2)j}, Z_{(A_2)j} Z_{(A_3)j}, \hdots, Z_{(A_{N-1})j} Z_{(A_N)j}, X_{(A_1)j} X_{(A_2)j} \hdots X_{(A_N)j} : j = 1, 2, ..., n \rangle
\end{equation}

\section{GHZ Distillation on Star Topology}
\label{sec:scheme}

\begin{figure}[h]
    \centering
    \includegraphics[width=0.6\textwidth]{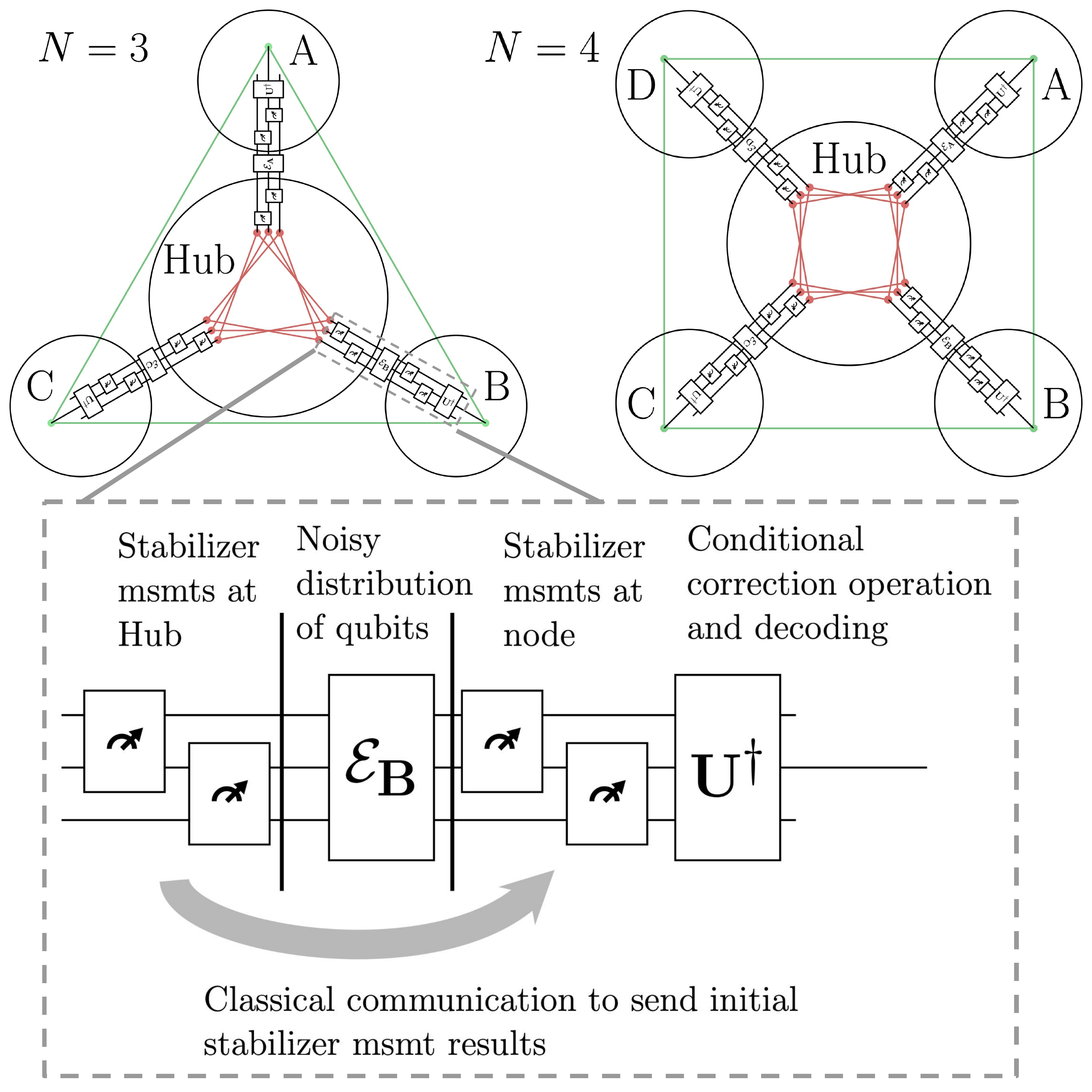}
    \caption{A visual depiction of the $N$-qubit GHZ distillation scheme on a star topology network for $N=3$ and $N=4$. Dots represent physical qubits, while an $N$-qubit GHZ state is drawn as a regular $N$-sided polygon connecting $N$ dots, i.e. triangles in the $N=3$ figure and squares in the $N=4$ figure. Both figures depict an example beginning with three copies at the central hub (small red polygons). After the circuit is complete, each party has one qubit in a distributed GHZ state shared between all other parties (large green polygon). Time proceeds radially outward from the hub, and circuit operations inside a particular circle are performed on that party's processor. Qubits experience a party-specific Pauli noise channel $\mathcal{E}$ en route from the Hub to the end party's processor.}
    \label{fig:scheme}
\end{figure}

We now give the details of our scheme which uses an $[[n,k,d]]$ stabilizer code to prepare $k$ GHZ states. Initially, $n$ copies of the $N$-qubit GHZ state are prepared at a central hub and arranged such that each of the $N$ parties has one qubit from each of the $n$ copies. This arrangement is visualized in \cref{fig:scheme} for $N=3,4$. While the qubits are still at the hub, the code generators $c_1, c_2, ..., c_r \in \mathcal{P}_n$ are measured separately on each party's qubits, resulting in a random bitstring of measurement outcomes $\alpha^{(i)} \in \mathbb{F}_2^r$ for each party. These measurement results are sent over a classical channel to each party. We assume for now that local operations and classical communication are noiseless. All of the code generators commute with each other, so the initial measurement step at the hub can happen in parallel. After this round of measurements, the resulting state is sent over noisy (Pauli) channels $\mathcal{E}_{A_i}$ to the parties, which are assumed to be spatially separated from the hub. In practice, this step can happen either by physically shuttling the qubits, or by teleporting the state with Bell measurements on $n$ imperfect Bell pairs shared between each party and the hub. This noisy distribution step may result in a Pauli error on each party's qubits, so the parties perform another round of code generator measurements to determine the corresponding error syndrome and apply a correction if necessary. The scheme fails when the correction step introduces a logical error that affects the encoded GHZ state.

\begin{algorithm}
\caption{One-way scheme for distilling $N$-qubit GHZ states shared by $N$ parties each connected to a central network hub.}
\label{alg:scheme}
\begin{algorithmic}
\Require $n$ copies of the $N$-qubit GHZ state at the central hub, $[[n,k,d]]$ stabilizer code specified by a list of code generators $c_1, c_2, ..., c_r$ with an agreed-upon order. ($r:=n-k$)
\Ensure $k$ copies of the $N$-qubit GHZ state shared by the $N$ parties, unless the QEC step introduces a logical error that is not in the stabilizer group $\mathcal{G}_N^{\copymajorsum k}$.

\Statex \hrulefill

\State \textbf{Central Hub}
\algstep{(a)}{Arranges qubits such that each party $A_1, A_2, ..., A_N$ receives one qubit from each of the $n$ copies of the GHZ state.}
\algstep{}{\textbf{For} $\mathbf{i=1,2,...,N}$\textbf{:}}
\algstepind{(b,i)}{Measures the code generators $c_1, c_2, ..., c_r$ on the $A_i$ qubits, obtaining a vector of measurement results $\alpha^{(i)} \in \mathbb{F}_2^r$.}
\algstepind{(c,i)}{Sends $A_i$ their qubits over a noisy (Pauli) channel $\mathcal{E}_{A_i}$. Also sends the measurement results $\alpha^{(i)}$ over a perfect classical channel.}

\State
\State

\State \textbf{Each Party $A_i$, after receiving their qubits}
\algstep{(a)}{Measures the code generators $c_1, c_2, ..., c_r$ on their qubits, obtaining a vector of measurement results $\beta^{(i)} \in \mathbb{F}_2^r$.}
\algstep{(b)}{Calculates the error syndrome $\gamma^{(i)} = \alpha^{(i)} \oplus \beta^{(i)}$ and uses $\gamma^{(i)}$ to infer the Pauli error applied during the noisy distribution step.}
\algstep{(c)}{Performs a recovery operation on their qubits, if necessary, to return to the $\alpha^{(i)}$ eigenspace of the code generators that the state was originally projected into at the hub.}
\algstep{(d)}{\textit{Optional:} Inverts the encoding unitary to obtain $k$ physical qubits in GHZ states shared with the other parties. Not necessary if further computations are to be performed at the logical level.}

\State
\State

\State // See \cref{sec:fidelity} for a discussion of when the scheme is successful and when it fails

\end{algorithmic}
\end{algorithm}

Our scheme is a one-way GHZ distillation protocol that reduces to local QEC by each party upon receipt of their qubits. Furthermore, the steps for each party can be conducted in parallel, and the generalization to arbitrary $N$ adds no conceptual complexity. The latter point is illustrated by the change between $N=3$ and $N=4$ in \cref{fig:scheme}: simply add another spoke to the wheel. The tradeoff for this simplicity is that a maximum of $rN$ stabilizer measurements must be performed at the hub prior to distributing the qubits. This may be prohibitive for large $N$ and $n$, but any optimization of syndrome measurement circuits can be applied to optimize this scheme for a particular code. In addition, due to the form of the Z-type generators of $\mathcal{G}_N$, any code generator consisting purely of Pauli Z operators only needs to be measured on one subsystem. This significantly reduces the number of measurements that must be performed for CSS codes (and any other code that involves some purely $Z$ code generators), especially for large $N$.

The classical communication requirements of our scheme are modest. While the hub must send $rN$ total bits of classical information per round of distillation, each party $A_i$ only needs to receive $r$ bits containing the measurement results $\alpha^{(i)}$ from their own qubits. Then, they perform local QEC on their qubits according to the error syndrome $\gamma^{(i)} = \alpha^{(i)} \oplus \beta^{(i)}$, and no classical communication between parties is required.

\begin{theorem}
\label{thm:Z_spec}
    Suppose the code generator $E(0, b)_{A_1}$ is measured on the $A_1$ subsystem, obtaining the result $\mu = \pm 1$. Then the state $\ket{\text{GHZ}_n^{(N)}}_{A_1 A_2 ... A_N}$ is simultaneously projected into the $\mu$ eigenspace of $E(0,b)_{A_k}$ for all $k = 2, ..., N$.
\end{theorem}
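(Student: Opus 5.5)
The plan is to argue entirely within the stabilizer formalism. The key observation is that $E(0,b)_{A_1} E(0,b)_{A_k}$ already lies in the stabilizer group $\mathcal{G}_N^{\copymajorsum n}$ of \cref{eqn:main_ghz_stab_Nn}. Once that is established, the conclusion follows from the fact that a measured outcome fixes the eigenvalue of every operator obtained by multiplying the measured observable by a stabilizer.

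First I will show that for any $k \in \{2,\dots,N\}$ and any $b \in \mathbb{F}_2^n$ the operator $E(0,b)_{A_1}E(0,b)_{A_k} = \prod_{j : b_j = 1} Z_{(A_1)j} Z_{(A_k)j}$ belongs to $\mathcal{G}_N^{\copymajorsum n}$. For each copy $j$, the product $Z_{(A_1)j}Z_{(A_k)j}$ equals the telescoping product $\prod_{\ell=1}^{k-1} Z_{(A_\ell)j}Z_{(A_{\ell+1})j}$ of the adjacent $ZZ$ generators, because the intermediate $Z$'s square to the identity. All of these are $Z$-type operators, so they commute and no phases arise; by \cref{lem:E_ab_properties}(b) the sign is $+1$. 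Taking the product over the $j$ with $b_j=1$ gives the claim.

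Next I will apply the measurement update rule of \cref{sec:stab_rules}. Since $E(0,b)_{A_1}$ is $Z$-type, it commutes with every $ZZ$ generator. It anticommutes with $X_{(A_1)j}\cdots X_{(A_N)j}$ exactly when $b_j=1$.

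If $b=0$ the statement is trivial. Otherwise the measurement outcome $\mu$ is random, and the post-measurement stabilizer group is generated by $\mu E(0,b)_{A_1}$ together with all the old generators that commute with it. The old generators that anticommute are replaced by pairwise products, which commute with the measured operator. The $ZZ$ generators are untouched, so $E(0,b)_{A_1}E(0,b)_{A_k}$ remains in the new group. Multiplying it by $\mu E(0,b)_{A_1}$ puts $\mu E(0,b)_{A_k}$ in the new stabilizer group, because $E(0,b)_{A_1}^2 = I$. Hence the post-measurement state is a $+1$ eigenstate of $\mu E(0,b)_{A_k}$, i.e. it lies in the $\mu$ eigenspace of $E(0,b)_{A_k}$, for every $k$.

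Equivalently, the same conclusion follows from a projector argument. Since $E(0,b)_{A_1}\ket{\psi} = E(0,b)_{A_k}\ket{\psi}$ for the GHZ state $\ket{\psi}$, the two projectors $\tfrac{1}{2}(I+\mu E(0,b)_{A_1})$ and $\tfrac{1}{2}(I+\mu E(0,b)_{A_k})$ act identically on $\ket{\psi}$. The main point to get right is this membership and sign bookkeeping; once it is in place, everything else is routine.
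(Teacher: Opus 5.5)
Your proposal is correct and matches the paper's argument. The paper also notes that the $Z$-type measured operator commutes with every $ZZ$ generator, so these survive the measurement update; it then telescopes adjacent generators into $Z_{(A_1)i}Z_{(A_k)i}$ and multiplies their product over the support of $b$ by $\mu E(0,b)_{A_1}$, which places $\mu E(0,b)_{A_k}$ in the post-measurement stabilizer group. Your projector reformulation is a valid equivalent, but it is not needed beyond the stabilizer-update argument.
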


\begin{proof}
    Recall the stabilizer group $\mathcal{G}_N^{\copymajorsum n}$ for $n$ copies of the $N$-qubit GHZ state, defined in \cref{eqn:main_ghz_stab_Nn}. Since the measured stabilizer $E(0, b)_{A_1}$ has no Pauli X terms on any qubit, it commutes with all of the $ZZ$ generators of $\mathcal{G}_N^{\copymajorsum n}$.

    \begin{equation}
        \label{eqn:comm_with_all_ZZ}
        [E(0,b)_{A_1}, Z_{(A_j)i} Z_{(A_{j+1})i}] = 0 \quad \forall i, j
    \end{equation}

    When $E(0,b)_{A_1}$ is measured and the outcome $\mu = \pm 1$ obtained, the stabilizer group is updated to $\mathcal{G}'$, where $\mu E(0,b)_{A_1} \in \mathcal{G}'$ (among other updates) \cite{gottesman1998_Heisenberg}. Only generators from the original stabilizer group which anticommute with the measured operator are removed or altered. Since the ZZ stabilizers all commute with the measured operator, we have $Z_{(A_j)i} Z_{(A_{j+1})i} \in \mathcal{G}'$ for all $i, j$.

    By closure of the stabilizer group, we can multiply the $ZZ$ generators on adjacent subsystems to obtain the further result $Z_{(A_1)i} Z_{(A_k)i} \in \mathcal{G}'$ for all $k = 2, ..., N$. This will now allow us to express $\mu E(0,b)_{A_k}$ as a product of elements of the stabilizer group for any $k=2,...,N$, which concludes the proof.

    \begin{equation}
        \label{eqn:op_in_stab}
        (\mu E(0,b)_{A_1}) \cdot \prod_{i = 1}^n \left( Z_{(A_1)i} Z_{(A_k)i} \right)^{b_i} = \mu E(0,b)_{A_k}
    \end{equation}
    
\end{proof}

\subsection{Stabilizer Tracking and Output Fidelity}
\label{sec:fidelity}

Initially, the collective state is $n$ copies of the $N$-qubit GHZ state, described by the stabilizer group $\mathcal{G}_N^{\copymajorsum n}$ with parity check matrix $H_{tot} = H_{\mathcal{G}_N}^{\copymajorsum n}$. Then, the code generators are measured on each party's qubits. The parity check matrix for the resulting stabilizer group is 

\begin{equation}
    \label{eqn:H_tot_prime}
    H_{tot}' = H_{\mathcal{C}}^{\oplus N} \cup H_{\mathcal{L}} .
\end{equation}

Whenever $\mathcal{C}$ admits a basis of logical operators in which the logical X (Z) operators consist of purely Pauli X (Z) operations on the physical qubits, the non-local state generators given by the rows of $H_{\mathcal{L}}$ are logical operators that stabilize the $k$ logical copies of the GHZ state (\cref{thm:css_states}). The purpose of the QEC step at each party is to return the code generators and these logical state generators to the correct signs, after the state is corrupted by Pauli noise in the distribution step. This combined noise and correction step results in the application of some Pauli operator to the overall system that commutes with all the code generators. The final state therefore has parity check matrix

\begin{equation}
    \label{eqn:H_tot_prime_prime}
    H_{tot}'' = H_{\mathcal{C}}^{\oplus N} \cup H_{\mathcal{L}}' .
\end{equation}

$H_{\mathcal{C}}^{\oplus N}$ remains unchanged between $H_{tot}'$ and $H_{tot}''$, since the correction step returns each party to the $\alpha^{(i)}$ codespace that was originally observed at the hub. However, if the combined noise and correction operation anticommutes with any of the logical stabilizers in $H_{\mathcal{L}}$, we will have $H_{\mathcal{L}}' \neq H_{\mathcal{L}}$. We now work through this discussion of the noise and correction step in more detail, and show how the fidelity of the protocol can be calculated for a particular code and decoder from the single-party QEC performance of the decoder. After the initial measurements are performed and the state is described by $H_{tot}'$, Pauli channels between the hub and each party act on the state with the error operator

\begin{equation}
    \label{eqn:error_op}
    D = D_1 \otimes D_2 \otimes ... \otimes D_N \quad \quad D_i \in \mathcal{P}_n.
\end{equation}

The subsequent round of syndrome measurements allows each party to infer the error $D_i$ that affected their qubits (up to a trivial global phase). This inference is performed by a decoding algorithm specific to the QEC code being employed. Each party then independently applies a recovery operation $R_i$ determined by their decoder output.

\begin{equation}
    \label{eqn:recov_op}
    R = R_1 \otimes R_2 \otimes ... \otimes R_N \quad \quad R_i \in \mathcal{P}_n
\end{equation}

The combined operation is $R D = \bigotimes_i R_i D_i$, where each $R_i D_i \in N(\mathcal{C})$ is some (possibly trivial) logical operator of the code $\mathcal{C}$. The operation enacted by each $R_i D_i = \overline{E(a_i,b_i)}$ on the $k$ logical qubits of the $A_i$ subsystem is obtained via the mapping $\pi_{\mathcal{C}}$ defined in \cref{eqn:logical_mapping_Eab}.

\begin{equation}
    \label{eqn:logical_noise_corr_Ai}
    \pi_{\mathcal{C}} (R_i D_i)_{A_i} =  E(a_i, b_i)_{A_i}, \quad \quad a_i, b_i \in \mathbb{F}_2^k
\end{equation}

The overall operation applied to the collective logical state is a $kN$-qubit Pauli operator $L$, up to a global phase $i^{\kappa}$ which we ignore.

\begin{equation}
    \label{eqn:logical_noise_corr_all}
    L = \pi_{\mathcal{C}}^{\oplus N} (RD) = \bigotimes_{i=1}^N E(a_i, b_i)_{A_i} \in \mathcal{P}_{kN}.
\end{equation}

The scheme is successful if and only if $L \in \mathcal{G}_N^{\copymajorsum k}$, i.e. $L$ is in the stabilizer group for $k$ copies of the noiseless GHZ state (modulo the global phase). If $L \notin \mathcal{G}_N^{\copymajorsum k}$, then the noise and correction step take the logical GHZ state described by $H_{tot}'$ to an orthogonal state. Let the probability of a particular logical error $E(a,b)_{A_i}$ occurring on the $i$th subsystem be $p_i (a,b)$. In practice, $p_i$ is set by the noise channel $\mathcal{E}_{A_i}$ between the hub and $A_i$, as well as the decoding algorithm used by $A_i$ to infer $R_i$ from the error syndrome $\gamma^{(i)}$. The probability of a particular overall logical error $L$ with the form in \cref{eqn:logical_noise_corr_all} is therefore

\begin{equation}
    \label{eqn:prob_L}
    P(L) = \prod_{i=1}^N p_i (a_i, b_i).
\end{equation}

We can now explicitly characterize the output mixed state of the protocol, $\rho_{\text{out}}$. Any single realization will produce a pure state drawn from the classical ensemble described by \cref{eqn:output_mixed_state}.

\begin{equation}
    \label{eqn:output_mixed_state}
    \begin{split}
    \rho_{\text{out}} &= \sum_{L \in \mathcal{Q}_{kN}} P(L) \cdot L \ket{\text{GHZ}_k^{(N)}} \bra{\text{GHZ}_k^{(N)}} L  \\
    &= \sum_{L \in \mathcal{G}_N^{\copymajorsum k}} P(L) \ket{\text{GHZ}_k^{(N)}} \bra{\text{GHZ}_k^{(N)}} + \sum_{L \notin \mathcal{G}_N^{\copymajorsum k}} P(L) \cdot L \ket{\text{GHZ}_k^{(N)}} \bra{\text{GHZ}_k^{(N)}} L
    \end{split}
\end{equation}

From this form of $\rho_{\text{out}}$ we can see that the fidelity of the protocol with the desired output pure state $\ket{\text{GHZ}_k^{(N)}}$ is equal to the probability of applying an overall logical error that is a stabilizer of the GHZ state.

\begin{equation}
    \label{eqn:fidelity}
    f_N = \sum_{L \in \mathcal{G}_N^{\copymajorsum k}} P(L)
\end{equation}

We work through details of the calculation for $f_N$ from the single-party logical error probability distributions $p_i(a_i, b_i)$ in \cref{sec:fidelity_details}. One particularly useful special case emerges when every party's noise channel and decoder are identical. Then, each party's logical error probability distribution is the same, $p_i (a, b) = p (a, b)$ for all $i=1,2, \hdots, N$. Let $\tilde{f_N}$ denote the output fidelity in this case. 

\begin{equation}
    \label{eqn:fidelity_equal}
    \tilde{f_N} = \frac{1}{2^k} \sum_{a, s \in \mathbb{F}_2^k} q(a,s)^N ,
\end{equation}

Where $q(a,s)$ in \cref{eqn:fidelity_equal} is defined as

\begin{equation}
    q(a, s) := \sum_{b \in \mathbb{F}_2^k} (-1)^{s b^T} p (a, b).
\end{equation}

This quantity arises from enforcing the global Z parity constraint $b_1 \oplus b_2 \oplus ... \oplus b_N = 0$ that is obeyed by all stabilizers of the GHZ state. See \cref{sec:fidelity_details} for the full details of derivation. Notably, \cref{eqn:fidelity_equal} allows us to calculate the fidelity for arbitrarily large $N$ without incurring the exponential cost of enumerating every overall logical error $L$ on $N$ subsystems. In fact, even in the worst case where each party's logical error probability distribution $p_i$ is unique, our method of calculating $p_{\text{succ}}^{(N)}$ scales only linearly in $N$, which is an exponential improvement over naively iterating through all possible logical errors. We take advantage of this simplification to calculate output infidelities for up to $N=100$, which we show in \cref{fig:5qubit_results}.

In summary, we showed in this section that our scheme can tolerate some logical errors in the single-party QEC step, if those logical errors conspire to stabilize the GHZ state. There is always a finite probability of such a fortuitous logical error, which makes the probability of perfect correction by all parties a lower bound for the fidelity of the logical GHZ state at the end of the protocol. If we suppose for simplicity that all parties experience the same noise channel, each party will have the same logical error rate $p_{\text{LE}}$, which gives the probability of perfect correction the form given in \cref{eqn:fid_lower_bound}.

\begin{equation}
    \label{eqn:fid_lower_bound}
    \tilde{f_N} \geq ( 1 - p_{\text{LE}})^N .
\end{equation}

The bound becomes tight for small $p_{\text{LE}}$, since the fortuitous logical errors that stabilize the GHZ state require at least two parties to fail in their QEC step. This is confirmed by the results in \cref{fig:5qubit_bound}.

\subsection{Code Capacity Results}
\label{sec:scheme_cc_results}

\begin{figure}[h]
    \centering

    \begin{minipage}[c]{0.4\textwidth}
        \centering
        \begin{subfigure}{\linewidth}
            \centering
            \includegraphics[width=\linewidth]{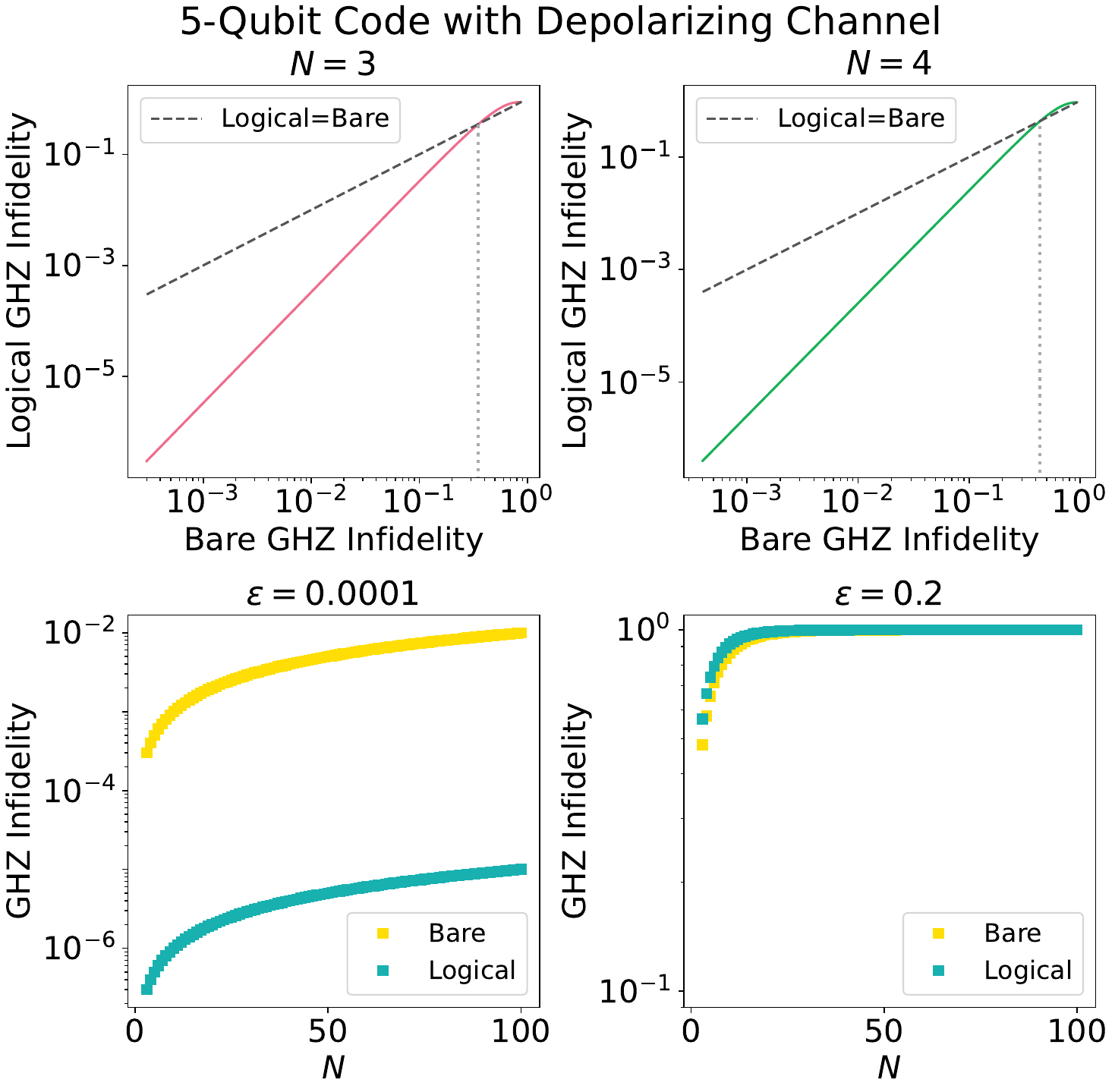}
            \caption{}
            \label{fig:5qubit_results}
        \end{subfigure}
    \end{minipage}
    \hspace{3mm}
    \begin{minipage}[c]{0.4\textwidth}
        \centering

        \begin{subfigure}{\linewidth}
            \centering
            \includegraphics[width=\linewidth]{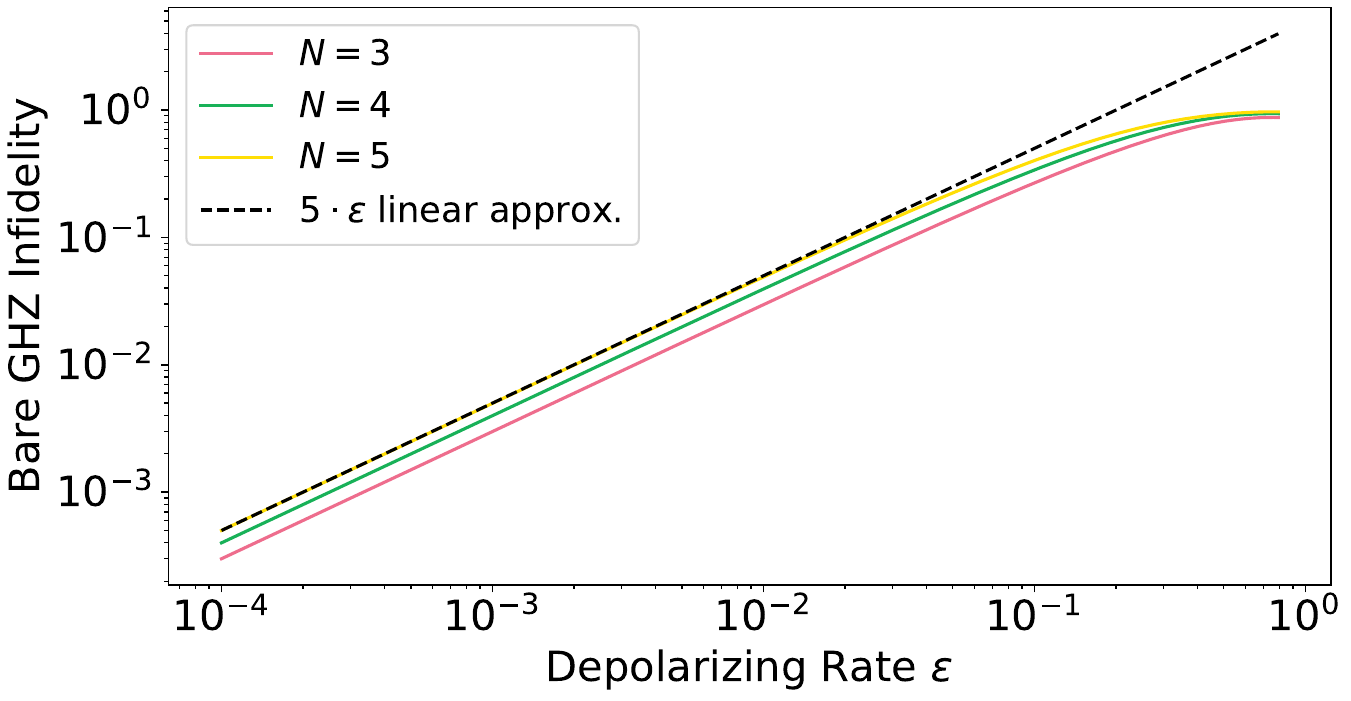}
            \caption{}
            \label{fig:bare_vs_eps}
        \end{subfigure}

        \vspace{0.5em}

        \begin{subfigure}{\linewidth}
            \centering
            \includegraphics[width=\linewidth]{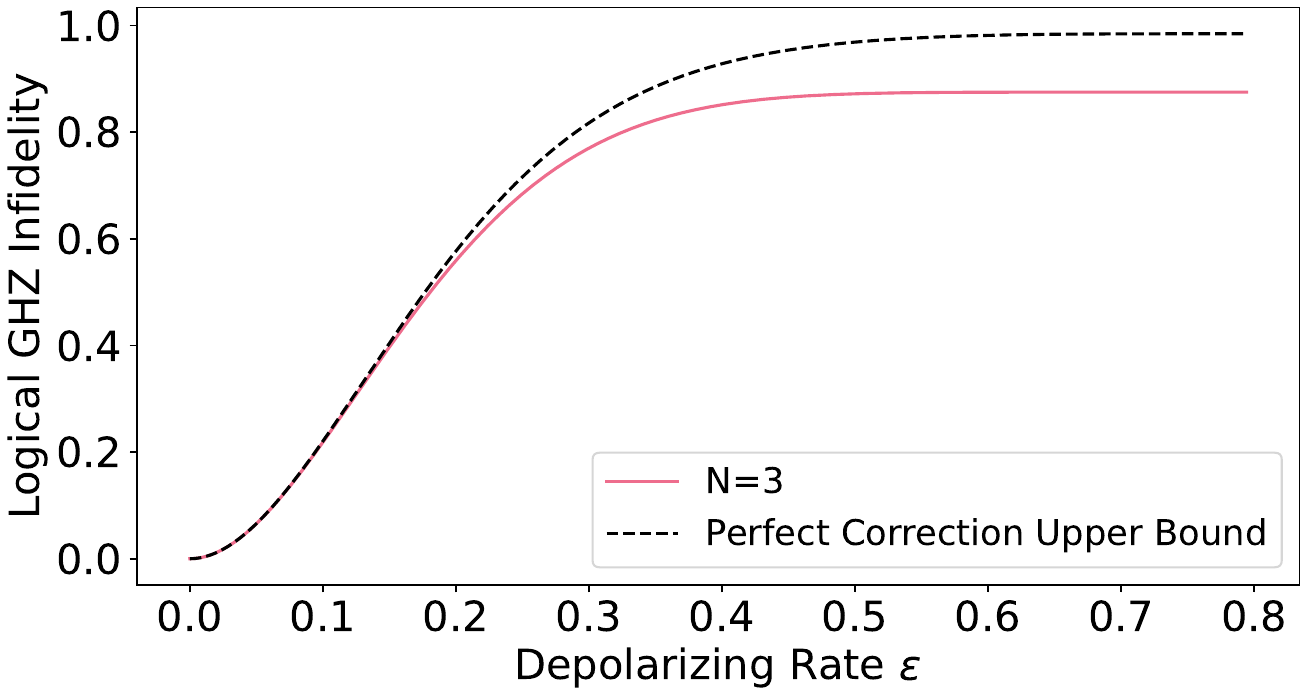}
            \caption{}
            \label{fig:5qubit_bound}
        \end{subfigure}
    \end{minipage}

    \caption{Results for GHZ distillation with the 5-qubit code in the presence of depolarizing noise. The top row of Figure (a) shows the output infidelity of the scheme for $N=3$ and $4$ plotted against the bare infidelity discussed in the main text. The bottom row gives results up to large $N$, calculated from \cref{eqn:fidelity_equal}, for two choices of the depolarizing rate $\varepsilon$. Figure (b) shows the bare infidelity as a function of $\varepsilon$ for $N=3,4,5$ which, for small $\varepsilon$, is well-approximated by $N \cdot \varepsilon$, since weight-1 errors dominate. Figure (c) shows, for $N=3$, how the logical infidelity collapses to the bound in \cref{eqn:fid_lower_bound} for small $\varepsilon$.}
    \label{fig:5qubit}
\end{figure}

\begin{figure}
    \centering

    \begin{subfigure}{0.38\textwidth}
        \centering
        \includegraphics[width=\linewidth]{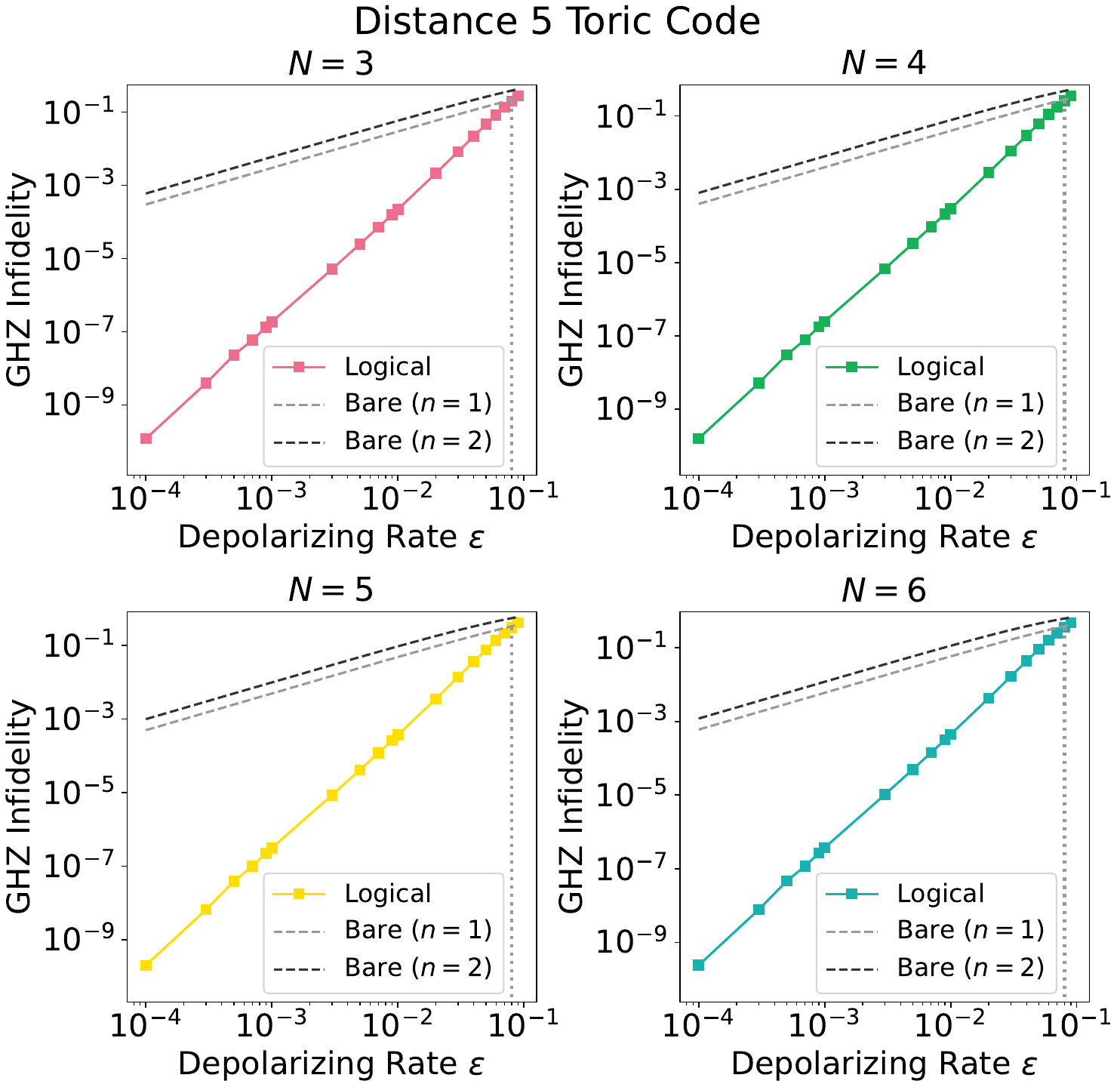}
        \caption{}
        \label{fig:tc_results}
    \end{subfigure}
    \hspace{0.02\textwidth}
    \begin{subfigure}{0.38\textwidth}
        \centering
        \includegraphics[width=\linewidth]{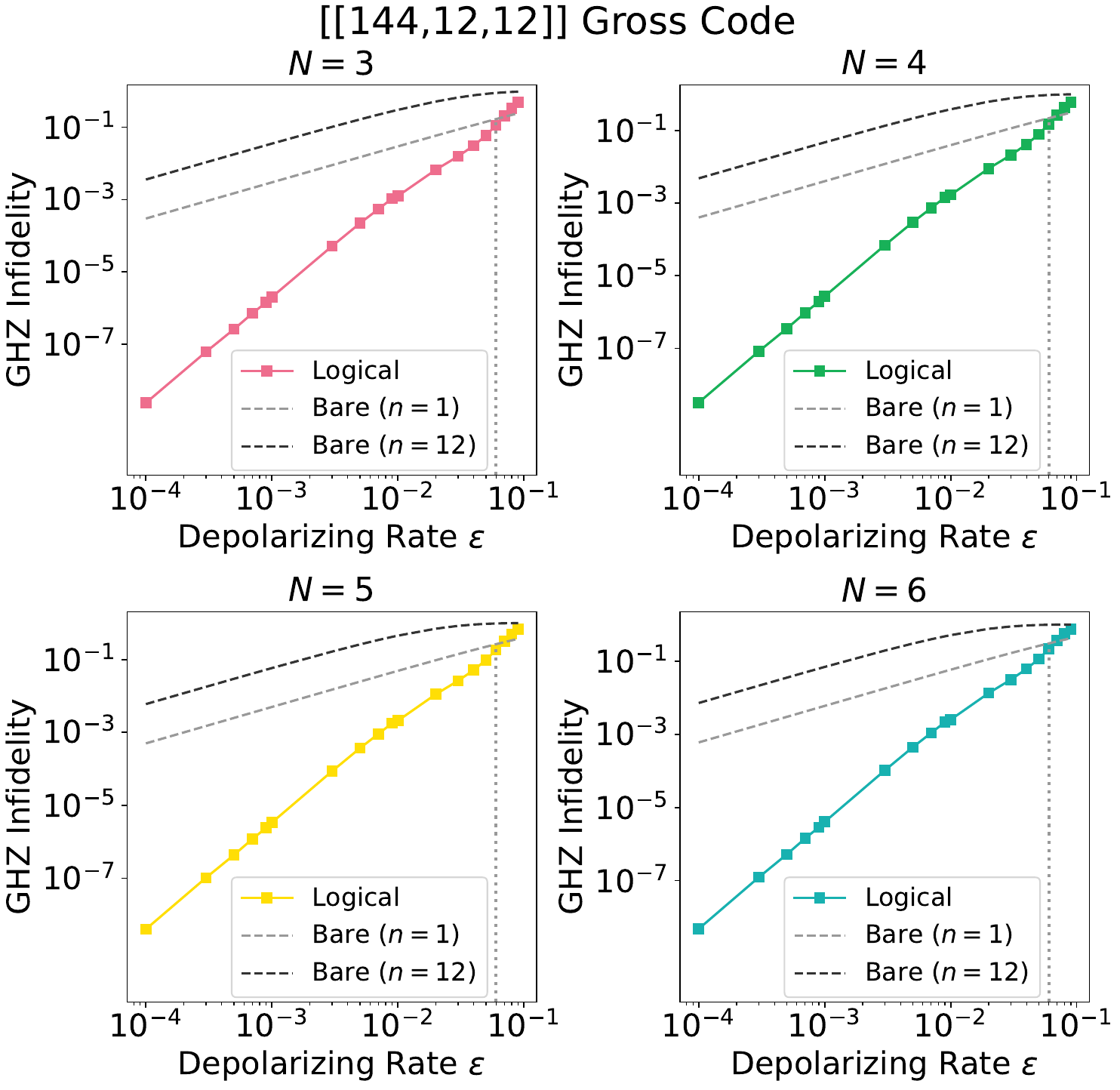}
        \caption{}
        \label{fig:bb_results}
    \end{subfigure}

    \caption{Results for GHZ distillation using (a) the distance 5 toric code and (b) the [[144, 12, 12]] Gross code. We compare to the infidelity of a single bare $N$-qubit GHZ state sent over a depolarizing channel with the same rate (light grey dashed line), as well as the infidelity of sending $k$ copies of the $N$-qubit GHZ state through the noisy channel (dark grey dashed line).}
    \label{fig:big_code_results}
\end{figure}

To assess the performance of our scheme in the ideal (perfect LOCC) case, we simulated the decoding problem for different stabilizer codes, using the depolarizing channel with $p_X = p_Y = p_Z = \varepsilon / 3$ on all physical qubits as a worst-case noise channel. In this section, we assume that the syndrome measurement circuit and recovery operation are realized with no errors. For simplicity in all the simulations presented in this work, each party's noise channel is identical. To benchmark practical realizations of this scheme in a heterogeneous architecture, the channel probabilities can be customized according to experimental details. We leave this to future work.

We calculate the output infidelity $1 - \tilde{f_N}$ by estimating the single-party logical error probability distribution $p(a,b)$ and using \cref{eqn:fidelity_equal} to calculate $\tilde{f_N}$for general $N$. This avoids the computational costs of simulating the full stabilizer tableau of $nN$ qubits and performing $N$ parallel decoding steps for every trial. For small codes, like the 5-qubit code, it is not too expensive to enumerate all of the possible physical errors and their probabilities. We use this approach to arrive at an exact expression for the logical error distribution $p(a,b)$ that would be obtained with a maximum-likelihood decoder. These exact results for the 5-qubit code are shown in \cref{fig:5qubit_results}, where we look both at specific values of $N$ over a range of depolarizing rates, and out to large $N$ with chosen values of $\varepsilon$. In the top row of \cref{fig:5qubit_results}, we mark with a vertical dotted line the point at which the scheme begins to improve the fidelity of the logical GHZ state, compared to the fidelity of sending a single raw GHZ state through a depolarizing channel with the same $\varepsilon$. The grey dashed line marks where the logical infidelity is equal to the bare infidelity. In the bottom row of \cref{fig:5qubit_results}, we show that the scheme improves fidelities by a similar factor over a wide range of $N$ values, when $\varepsilon$ is small. The $\varepsilon=0.2$ plot gives an example of an error rate large enough that the scheme actually worsens performance. With the 5-qubit code, the crossover point where the scheme improves output GHZ state fidelities for all $N$ from 3 to 100 occurs at $\varepsilon \approx 0.138$. For experimentally achievable depolarizing rates on the order of $1\%$, our scheme using the 5-qubit code suppresses the noise by an order of magnitude.

To benchmark performance with larger codes, Monte Carlo simulations with suboptimal decoding algorithms offer an  efficient alternative to exactly calculating $p(a,b)$. We estimate the single-party logical error probability distribution by checking, in each trial, which logical coset the combined noise and correction operation belongs to. The toric code results in \cref{fig:tc_results} were obtained by using minimum-weight perfect matching (MWPM) \cite{dennis2001_Topological,higgott2025_Sparse}. We deployed the BP-OSD decoder to generate the Gross code results show in \cref{fig:bb_results} \cite{roffe2020_Decoding,roffe2022_LDPC}. For each code, we performed between $5 \cdot 10^{8}$ and $5 \cdot 10^{10}$ shots at each depolarizing rate.

\section{Beyond GHZ}
\label{sec:beyond}

Our scheme works for distilling GHZ states precisely because, after the initial round of code generator measurements on each party's qubits, the non-local generators in $H_{tot}'$ are the logical versions of the GHZ stabilizers. We establish in this section which other stabilizer states can be distilled by our scheme in the same way. Specifically, consider taking an arbitrary $N$-qubit stabilizer state given by $\mathcal{S}$ and arranging $n$ copies of the state so that the $A_i$ subsystem contains the $i$th qubit of each state. The collective stabilizer group of this arrangement is $\mathcal{S}_{tot} = \mathcal{S}^{\copymajorsum n}$, which is visualized in \cref{fig:stab_state_ncopies}. Given an $[[n,k,d]]$ stabilizer code $\mathcal{C}$, we ask: when does our same protocol of code generator measurements, noisy qubit distribution, syndrome extraction, and QEC by each party result in $k$ copies of the logical stabilizer state? A formal definition is given in \cref{def:success}. In an initial exploration of this question, we performed an exhaustive numerical search of all $N=3$-qubit stabilizer states, which showed that the 5-qubit code succeeds in distilling all of them. However, other codes failed to distill some stabilizer states. Our \cref{thm:css_states,thm:all_states} establish general compatibility conditions between a state $\mathcal{S}$ and a code $\mathcal{C}$ which explain these observations.

The overall stabilizer group after the initial code generator measurements, $\mathcal{S}_{tot}'$, is generated by $H_{tot}' = H_{\mathcal{C}}^{\oplus N} \cup H_{\mathcal{L}}$. The group of logical operators specified by the rows of $H_{\mathcal{L}}$, which we call $\mathcal{L}$, can be characterized exactly: its generators are $kN$ independent operators from the initial stabilizer group $\mathcal{S}_{tot} = \mathcal{S}^{\copymajorsum n}$ that commute with all of the code generators. Writing the centralizer of $\mathcal{C}^{\oplus N}$ in the $nN$-qubit Pauli group as $C (\mathcal{C}^{\oplus N})$, we have

\begin{equation}
    \label{eqn:logical_stabs}
    \mathcal{L} = \mathcal{S}^{\copymajorsum n} \cap C (\mathcal{C}^{\oplus N}) .
\end{equation}

\begin{definition}
\label{def:success}
    The $[[n,k,d]]$ code $\mathcal{C}$ \textbf{successfully distills} $n$ copies of the stabilizer state $\mathcal{S}$ into $k$ logical copies if and only if the logical subgroup of the post-measurement stabilizer group is the logical version of $k$ copies of $\mathcal{S}$, i.e. if and only if

    \begin{equation}
        \label{eqn:success_defn}
        \pi_{\mathcal{C}}^{\oplus N} ( \mathcal{L} ) = \mathcal{S}^{\copymajorsum k} .
    \end{equation}
\end{definition}

The theorems we prove below ignore the sign column of the parity check matrices, since the sign bits $\alpha^{(i)}$ obtained from the initial code generator measurements are random for each run. The signs of the non-local logical stabilizers in $H_{\mathcal{L}}$ are unaffected by the random measurement outcomes. In proving the Theorems, we construct logical operators for the code $\mathcal{C}$. By fixing these as the logical representatives of each generating coset of the logical group $N(\mathcal{C}) / \mathcal{C}$, independent of which code space of $\mathcal{C}$ the qubits are initially projected into, our results hold for all $\alpha^{(i)}$. The sign bits only affect products of logical representatives with code generators. We now repeat for convenience the form of the code parity check matrix $H_{\mathcal{C}}$ and the state parity check matrix $H_{\mathcal{S}}$, omitting the sign column.

\begin{equation}
    \label{eqn:H_C_no_sign}
    H_{\mathcal{C}} = \left( \begin{array}{c|c}
        X & Z
    \end{array} \right)
\end{equation}

\begin{equation}
    \label{eqn:H_S_no_sign}
    H_{\mathcal{S}} = \left( \begin{array}{c|c}
        A & B
    \end{array} \right)
\end{equation}

\begin{lemma}
    \label{lem:pure_x_z}
    A code $\mathcal{C}$ admits a logical basis of purely X(Z)-type logical X(Z) operators if and only if there exist matrices $U, V \in \mathbb{F}_2^{k \times n}$ such that $XV^T = 0$, $ZU^T = 0$, and $UV^T = I_k$.
\end{lemma}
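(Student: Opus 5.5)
The plan is to translate both directions into the symplectic picture via \cref{lem:E_ab_properties}(c). A purely X-type operator $E(u,0)$ commutes with a code generator $E(x_i,z_i)$ exactly when $u z_i^T = 0$. A purely Z-type operator $E(0,v)$ commutes with it exactly when $x_i v^T = 0$. The logical relations between such operators are also governed by the symplectic form: $E(u_j,0)$ and $E(0,v_l)$ anticommute iff $u_j v_l^T = 1$. Two X-type (or two Z-type) operators always commute.

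For the forward direction, suppose $\overline{X_j} = E(u_j,0)$ and $\overline{Z_j} = E(0,v_j)$, for $j=1,\dots,k$, form a logical basis (up to sign, which we ignore as in the surrounding discussion). Stack the $u_j$ as the rows of $U$ and the $v_j$ as the rows of $V$.
\begin{itemize}[leftmargin=2em]
\item Membership in $N(\mathcal{C})$ gives $Z U^T = 0$ and $X V^T = 0$.
\item The canonical logical Pauli relations say $\overline{X_j}$ anticommutes with $\overline{Z_l}$ iff $j=l$. Hence $u_j v_l^T = \delta_{jl}$, i.e.\ $UV^T = I_k$.
\end{itemize}

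For the converse, given $U,V$ with the three properties, define $\overline{X_j} := E(u_j,0)$ and $\overline{Z_j} := E(0,v_j)$.
\begin{itemize}[leftmargin=2em]
\item The conditions $ZU^T=0$ and $XV^T=0$ put these in $N(\mathcal{C})$.
\item The X-type operators mutually commute, as do the Z-type ones.
\item $UV^T=I_k$ gives the correct anticommutation pattern between them.
\end{itemize}

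The remaining issue, which I expect to be the main obstacle, is showing these $2k$ operators actually generate $N(\mathcal{C})/\mathcal{C}$, rather than merely satisfying the right relations. In particular, none of them may be a stabilizer times a phase, and no product of them may lie in $\mathcal{C}$.

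I would argue this by contradiction. Suppose a nontrivial product $g = \prod_j \overline{X_j}^{s_j}\overline{Z_j}^{t_j}$ lies in $\mathcal{C}$ up to phase. Since $g$ is then a stabilizer, it commutes with every element of $N(\mathcal{C})$, in particular with all $\overline{X_l}$ and $\overline{Z_l}$. By $UV^T=I_k$, commutation with $\overline{Z_l}$ forces $s_l=0$, and commutation with $\overline{X_l}$ forces $t_l=0$, contradicting nontriviality.

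Hence the $2k$ operators are independent modulo $\mathcal{C}$. They therefore generate a subgroup of $N(\mathcal{C})/\mathcal{C}$ of order $4\cdot 2^{2k}$ including phases. Since $N(\mathcal{C})/\mathcal{C}\cong\mathcal{P}_k$ has exactly this order, the subgroup is the whole group. The operators thus form a logical basis with the canonical commutation relations, which are purely X-type and purely Z-type as required.
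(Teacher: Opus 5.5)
Your proposal is correct and follows the same route as the paper. In both directions you use the symplectic inner product to turn membership in $N(\mathcal{C})$ into $ZU^T=0$ and $XV^T=0$, and to turn the canonical logical commutation relations into $UV^T=I_k$. Your extra step is also sound: you show the $2k$ operators are independent modulo $\mathcal{C}$ and then generate all of $N(\mathcal{C})/\mathcal{C}$ by counting orders. This makes explicit what the paper only asserts when it says the constructed operators ``form a valid logical basis.''
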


\begin{proof}
    We prove the sufficient direction first. Suppose the matrices $U, V$ exist as specified, and define $u_j$ ($v_j$) as the $j$th row of $U$ ($V$). Then we can construct $n$-qubit Pauli operators

    \begin{equation}
    \label{eqn:pure_x_z}
        \overline{X_j} = E(u_j, 0) \quad \quad \text{and} \quad \quad \overline{Z_j} = E(0, v_j) \quad j = 1,2, \hdots, k
    \end{equation}

    We now show that the $\overline{X_j}$ and $\overline{Z_j}$ form a valid logical basis for the code $\mathcal{C}$. The condition $ZU^T = 0$ ensures that the $\overline{X_j}$ operators commute with all of the code generators, while $XV^T = 0$ does the same for $\overline{Z_j}$. Finally, $UV^T = I_k$ together with \cref{lem:E_ab_properties}(c) implies the commutation relation

    \begin{equation}
        \label{eqn:logical_commutation}
        \begin{split}
        \overline{X_i} \overline{Z_j} &= E(u_i, 0) E(0, v_j) \\
        &= (-1)^{u_i v_j^T} E(0, v_j) E(u_i, 0) \\
        &= (-1)^{\delta_{ij}} \overline{Z_j} \overline{X_i} .
        \end{split}
    \end{equation}

    For the necessary direction, we assume that a valid basis of purely X/Z logical operators exists. The logical operators therefore have the form in \cref{eqn:pure_x_z}, and we can construct the matrices $U$ and $V$ from the bitstrings $u_j$ and $v_j$.

    \begin{equation}
        \label{eqn:construct_UV_from_rows}
        U = \left( \begin{array}{c}
            u_1 \\
            u_2 \\
            \vdots \\
            u_k
        \end{array} \right) \quad \quad \text{and} \quad \quad V = \left( \begin{array}{c}
            v_1 \\
            v_2 \\
            \vdots \\
            v_k
        \end{array} \right)
    \end{equation}
    
    The conditions on $U$ and $V$ then follow from the fact that these logical operators must commute with all the generators of $\mathcal{C}$ and satisfy the commutation relation $\overline{X_i} \overline{Z_j} = (-1)^{\delta_{ij}} \overline{Z_j} \overline{X_i}$ with each other.
    
\end{proof}

\begin{theorem}
    \label{thm:css_states}
    If a code $\mathcal{C}$ admits a logical basis of purely X(Z)-type logical X(Z) operators, then \cref{def:success} is satisfied for any CSS stabilizer state $\mathcal{S}_{CSS}$.
\end{theorem}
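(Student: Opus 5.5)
The plan is to build, for every generator of $\mathcal{S}_{CSS}^{\copymajorsum k}$, an explicit element of $\mathcal{L} = \mathcal{S}_{CSS}^{\copymajorsum n} \cap C(\mathcal{C}^{\oplus N})$ whose logical image under $\pi_{\mathcal{C}}^{\oplus N}$ is that generator. A maximality argument then upgrades the resulting inclusion to the equality in \cref{eqn:success_defn}.

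Since $\mathcal{S}_{CSS}$ is CSS, we may take its generators to be purely X-type, $E(a,0)$, or purely Z-type, $E(0,b)$, with $a, b \in \mathbb{F}_2^N$. By \cref{lem:pure_x_z} there are matrices $U, V$ with $XV^T = 0$, $ZU^T = 0$ and $UV^T = I_k$. These give the logical basis $\overline{X_j} = E(u_j,0)$ and $\overline{Z_j} = E(0,v_j)$ of \cref{eqn:pure_x_z}.

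\textbf{X-type generators.} Fix an X-type state generator $E(a,0)$ and a logical index $j$. Define the $nN$-qubit operator
\begin{equation*}
  \Xi_{a,j} := \bigotimes_{i : a_i = 1} E(u_j, 0)_{A_i} .
\end{equation*}
The key observation is that $\Xi_{a,j} = \prod_{l=1}^n \bigl(E(a,0)^{(l)}\bigr)^{(u_j)_l}$, where $E(a,0)^{(l)}$ is the copy of the generator acting on copy $l$. This holds because copy $l$ places its $i$th qubit at position $l$ of party $A_i$. Hence $\Xi_{a,j} \in \mathcal{S}_{CSS}^{\copymajorsum n}$.

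On each party, $\Xi_{a,j}$ is either trivial or equal to $E(u_j,0)$. By \cref{lem:E_ab_properties}(c), $E(u_j,0)$ commutes with a code generator $E(x,z)$ exactly when $u_j z^T = 0$, and this is guaranteed by $ZU^T = 0$. So $\Xi_{a,j} \in C(\mathcal{C}^{\oplus N})$, and therefore $\Xi_{a,j} \in \mathcal{L}$. Its logical image is
\begin{equation*}
  \pi_{\mathcal{C}}^{\oplus N}(\Xi_{a,j}) = \bigotimes_{i : a_i=1} (X_j)_{A_i},
\end{equation*}
which is the $j$th logical copy of the generator $E(a,0)$.

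\textbf{Z-type generators.} The same construction applies with $V$ in place of $U$, using $XV^T = 0$ for commutation with the code generators. Ranging over all $j = 1, \hdots, k$ and all generators, we obtain $\mathcal{S}_{CSS}^{\copymajorsum k} \subseteq \pi_{\mathcal{C}}^{\oplus N}(\mathcal{L})$, up to signs. Signs are ignored per the discussion preceding the theorem: the representatives are fixed, and each $\Xi$ is literally a product of state generators, so no sign ambiguity arises.

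\textbf{Reverse inclusion.} Elements of $\mathcal{L}$ commute with every code generator, so $\pi_{\mathcal{C}}^{\oplus N}$ is defined on $\mathcal{L}$. Since $\pi_{\mathcal{C}}$ preserves commutation relations, $\pi_{\mathcal{C}}^{\oplus N}(\mathcal{L})$ is an abelian subgroup of $\mathcal{P}_{kN}$ modulo phases. It also cannot contain $-I$ (with the sign conventions fixed), because $\mathcal{L}$ stabilizes a nonzero state. On the other hand, $\mathcal{S}_{CSS}^{\copymajorsum k}$ is the stabilizer group of a $kN$-qubit stabilizer state, with $kN$ independent generators, so it is a maximal abelian subgroup of this kind. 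An abelian group containing a maximal one must equal it, which gives \cref{eqn:success_defn}.

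I expect this last step to be the main obstacle, specifically handling phases and the kernel of $\pi_{\mathcal{C}}^{\oplus N}$ on $\mathcal{L} \cap \mathcal{C}^{\oplus N}$. If the maximality argument gets delicate, I would instead count dimensions. $\mathcal{L}$ has $kN$ independent logical generators, as stated around \cref{eqn:logical_stabs}. The images of the constructed $\Xi$'s already span a $kN$-dimensional symplectic subspace, since they are the independent generators of $\mathcal{S}_{CSS}^{\copymajorsum k}$. So the image of $\mathcal{L}$ can be no larger.
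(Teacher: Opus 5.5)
Your proof is correct, and its construction is the paper's own. Your $\Xi_{a,j}$ is the paper's $g_x^{(\ell,j)}$. You place it in $\mathcal{S}_{CSS}^{\copymajorsum n}$ by the same copy-by-copy factorization, place it in $C(\mathcal{C}^{\oplus N})$ via $ZU^T=0$ (resp.\ $XV^T=0$), and obtain the same logical image.

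Where you differ is the closing step. The paper argues that the $kN$ constructed operators are independent and that $\mathcal{L}$ has exactly $kN$ generators, so they generate $\mathcal{L}$. Read literally, that count fails, because $\mathcal{L}$ as defined in \cref{eqn:logical_stabs} can contain nontrivial elements of $\mathcal{C}^{\oplus N}$. For example, with GHZ states and a $Z$-type code generator $E(0,z)$, the operator $E(0,z)_{A_1}E(0,z)_{A_2}$ lies in both $\mathcal{G}_N^{\copymajorsum n}$ and $\mathcal{C}^{\oplus N}$ (cf.\ \cref{thm:Z_spec}). So the constructed operators generate $\mathcal{L}$ only modulo the kernel of $\pi_{\mathcal{C}}^{\oplus N}$. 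The count is really about the $kN$ rows of $H_{\mathcal{L}}$ sitting on top of $H_{\mathcal{C}}^{\oplus N}$.

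Your maximality argument needs no count of $\mathcal{L}$ at all. Modulo phases, $\pi_{\mathcal{C}}^{\oplus N}(\mathcal{L})$ is an isotropic subspace of $\mathbb{F}_2^{2kN}$, since $\mathcal{L}$ is abelian and $\pi_{\mathcal{C}}$ preserves commutation. It contains the $kN$-dimensional, hence Lagrangian, subspace of $\mathcal{S}_{CSS}^{\copymajorsum k}$, so the two coincide, and the kernel $\mathcal{L}\cap\mathcal{C}^{\oplus N}$ is absorbed automatically. The paper's route is shorter and ties the result directly to the tableau $H_{tot}'$; yours is the more robust of the two. Your fallback dimension count is essentially the paper's argument, and it is valid exactly in the ``modulo $\mathcal{C}^{\oplus N}$'' sense you flag.

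One small correction to your sign remark. If a state generator carries a sign, $(-1)^{\tau_\ell}E(a_\ell',0)$, then the corresponding product over copies carries $(-1)^{\tau_\ell |u_j|}$. For even-weight $u_j$ the logical image therefore has the opposite sign from the target. For example, the $[[4,2,2]]$ code with $\overline{X_1}=XXII$, $\overline{Z_1}=ZIZI$ turns four copies of the Bell state stabilized by $-XX$ and $ZZ$ into a logical pair stabilized by $+XX$. This is harmless under the paper's convention of dropping the sign column, and a deterministic logical Pauli frame fixes it, as in \cref{thm:all_states}. Still, ``no sign ambiguity arises'' holds only for unsigned generators.
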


\begin{proof}

    Let a code $\mathcal{C}$ satisfying \cref{lem:pure_x_z} be given. Then the logical operators have the form in \cref{eqn:pure_x_z}. We want to show that $\pi_{\mathcal{C}}^{\oplus N} (\mathcal{L}) = \mathcal{S}_{CSS}^{\copymajorsum k}$, where $\mathcal{L}$ is defined in \cref{eqn:logical_stabs}.

    A CSS stabilizer state has a basis of generators that are all either purely products of Pauli X, or Pauli Z. That is, a CSS stabilizer state parity check matrix can be row reduced to the form

    \begin{equation}
        \label{eqn:H_S_CSS_no_sign}
        H_{\mathcal{S}_{CSS}} = \left( \begin{array}{c|c}
            A' & 0 \\
            0 & B'
        \end{array} \right) \quad \quad A' \in \mathbb{F}_2^{N_X \times N} \quad B' \in \mathbb{F}_2^{N_Z \times N} \quad N_X + N_Z = N
    \end{equation}

    We can write the generators of $\mathcal{S}_{CSS}^{\copymajorsum k}$ explicitly, using $a_{\ell}'$ ($b_{\ell}'$) to denote the $\ell$th row of $A'$ ($B'$).

    \begin{equation}
        \mathcal{S}_{CSS}^{\copymajorsum k} = \langle \{ s_{x}^{(\ell, j)} : \ell = 1, \hdots, N_X; \ \ j = 1, \hdots, k  \} \cup \{ s_{z}^{(\ell, j)} : \ell = 1, \hdots, N_Z; \ \ j = 1, \hdots, k \} \rangle
    \end{equation}

    where
    
    \begin{equation}
        \label{eqn:copies_stab_gen_explicit_css}
        s_x^{(\ell, j)} = \bigotimes_{i=1}^N (X_j)_{A_i}^{(a_{\ell}')_i} \quad \quad \text{and} \quad \quad s_z^{(\ell, j)} = \bigotimes_{i=1}^N (Z_j)_{A_i}^{(b_{\ell}')_i} .
    \end{equation}

    We now define operators $g_x^{(\ell, j)}$ that belong to $\mathcal{L}$, and show that they map to the $s_x^{(\ell, j)}$ under $\pi_{\mathcal{C}}^{\oplus N}$. Recall from the definition of $\mathcal{L}$ that $g \in \mathcal{L}$ if and only if $g \in \mathcal{S}_{CSS}^{\copymajorsum n}$ and $g \in C(\mathcal{C}^{\oplus N})$.

    \begin{equation}
        \label{eqn:g_x_operator}
        g_x^{(\ell, j)} = \bigotimes_{i=1}^N E(u_j, 0)_{A_i}^{(a_{\ell}')_i} \quad \quad \ell = 1, \hdots, N_X; \ \ j = 1, \hdots, k
    \end{equation}

    The logical operator $E(u_j, 0)_{A_i}$ commutes with all code generators on the $A_i$ subsystem, so $g_x^{(\ell, j)} \in C (\mathcal{C}^{\oplus N})$. To show that $g_x^{(\ell, j)} \in \mathcal{S}_{CSS}^{\copymajorsum n}$ as well, we express it as a product of generators of $\mathcal{S}_{CSS}^{\copymajorsum n}$.

    \begin{equation*}
        \begin{split}
            g_x^{(\ell, j)} &= \bigotimes_{i=1}^N \left( \bigotimes_{m=1}^n (X_m)^{(u_j)_m}  \right)_{A_i}^{(a_{\ell}')_i} \\
            &= \bigotimes_{m=1}^n \left( \bigotimes_{i=1}^N (X_m)_{A_i}^{(a_{\ell}')_i}  \right)^{(u_j)_m} \\
            &= \bigotimes_{m=1}^n \left( s_x^{\ell, m} \right)^{(u_j)_m} \in \mathcal{S}_{CSS}^{\copymajorsum n}
        \end{split}
    \end{equation*}

    Now, we apply the logical mapping $\pi_{\mathcal{C}}^{\oplus N}$ to this member of $\mathcal{L}$.

    \begin{equation*}
        \begin{split}
            \pi_{\mathcal{C}}^{\oplus N} \left( g_x^{(\ell, j)} \right) &= \bigotimes_{i=1}^N \left( \pi_{\mathcal{C}} \left( E(u_j, 0) \right) \right)_{A_i}^{(a_{\ell}')_i} \\
            &= \bigotimes_{i=1}^N \left( X_j \right)_{A_i}^{(a_{\ell}')_i} \\
            &= s_x^{(\ell, j)}
        \end{split}
    \end{equation*}

    If we also define

    \begin{equation}
        \label{eqn:g_z_operator}
        g_z^{(\ell, j)} = \bigotimes_{i=1}^N E(0, v_j)_{A_i}^{(b_{\ell}')_i}, \quad \quad \ell = 1, \hdots, N_Z \ \ j = 1, \hdots, k
    \end{equation}

    showing that $\pi_{\mathcal{C}}^{\oplus N} \left( g_x^{(\ell, j)} \right) = s_z^{(\ell, j)}$ proceeds analogously.

    There are $kN$ such $g_{x/z}^{(\ell, j)}$ operators in $\mathcal{L}$. By independence of the logical operators of $\mathcal{C}$ and of the state generators specified by $A'$ and $B'$, all of the $g_{x/z}^{(\ell, j)}$ are independent of each other. Therefore, they must be a complete generating set for the group $\mathcal{L}$, since its parity check matrix has exactly $kN$ rows. By noticing that the $g_{x/z}^{(\ell, j)}$ map to the $s_{x/z}^{(\ell, j)}$ generators of $\mathcal{S}_{CSS}^{\oplus k}$ under $\pi_{\mathcal{C}}^{\oplus N}$, we complete the proof.

\end{proof}

The necessary condition of \cref{thm:css_states} is satisfied by nearly all codes; most notably, every CSS code admits such a logical basis of pure X and Z operators \cite{wilde2009_Logical}. Hence the steps of our protocol serve equally well to distilling $k$ copies of any CSS stabilizer state from $n$ noisy copies of that same state. Besides the $N$-qubit GHZ state, our protocol can therefore produce distributed, high-fidelity logical $\ket{0}^{\otimes k}$ states of any CSS stabilizer code, for example. Toric code ground states could be directly distilled with our protocol. For these beyond-GHZ applications of our protocol, a procedure analogous to the one given in \cref{sec:fidelity} can be used to calculate the final logical state fidelity from single-party decoder performance. The only difference is the form of the logical stabilizer group $\mathcal{S}^{\copymajorsum k}$ containing the allowed logical errors. The lower bound on the success probability given in \cref{eqn:fid_lower_bound} applies equally to all stabilizer states.

We now look beyond CSS stabilizer states and establish a stronger sufficient condition on the logical operators of a code which ensures that \cref{def:success} is satisfied for \textit{all} stabilizer states. Our observation that the 5-qubit code successfully distills all $N=3$-qubit stabilizer states is explained by the $k=1$ special case of \cref{thm:all_states}.

\begin{theorem}
    \label{thm:all_states}
    If a code $\mathcal{C}$ admits a logical basis of purely X(Z)-type logical X(Z) operators with identical support, then \cref{def:success} is satisfied for any stabilizer state $\mathcal{S}$.
\end{theorem}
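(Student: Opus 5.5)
We follow the template of the proof of \cref{thm:css_states}, now with logical operators $\overline{X_j} = E(u_j,0)$ and $\overline{Z_j} = E(0,u_j)$ sharing the support $u_j$. In the language of \cref{lem:pure_x_z}, this means $U = V$ with $XU^T = ZU^T = 0$ and $UU^T = I_k$. Since a general state $\mathcal{S}$ has no CSS decomposition, we work with its full parity check matrix $H_{\mathcal{S}} = (A|B)$ and generators $s_\ell = E(a_\ell, b_\ell)$ for $\ell = 1,\hdots,N$. The generators of $\mathcal{S}^{\copymajorsum k}$ are then
\begin{equation*}
s^{(\ell,j)} = \bigotimes_{i=1}^N \left( X_j^{(a_\ell)_i} Z_j^{(b_\ell)_i} \right)_{A_i}
\end{equation*}
up to phase, for $j = 1,\hdots,k$.

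\textbf{Candidate preimages.} The natural candidates are
\begin{equation*}
g^{(\ell,j)} = \bigotimes_{i=1}^N \left( E(u_j,0)^{(a_\ell)_i} E(0,u_j)^{(b_\ell)_i} \right)_{A_i}.
\end{equation*}
Each tensor factor is a product of logical operators of $\mathcal{C}$, so $g^{(\ell,j)} \in C(\mathcal{C}^{\oplus N})$. Applying $\pi_{\mathcal{C}}$ factorwise gives $\pi_{\mathcal{C}}^{\oplus N}(g^{(\ell,j)}) = s^{(\ell,j)}$ up to phase, and phases are irrelevant because we ignore signs as stated before the theorems.

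\textbf{Membership in $\mathcal{S}^{\copymajorsum n}$.} This is the key step, and it is where identical supports matter. We regroup $g^{(\ell,j)}$ copy by copy:
\begin{equation*}
g^{(\ell,j)} \propto \bigotimes_{m=1}^n \left( \bigotimes_{i=1}^N (X_m)_{A_i}^{(a_\ell)_i} (Z_m)_{A_i}^{(b_\ell)_i} \right)^{(u_j)_m} = \prod_{m=1}^n \left( s_\ell^{(m)} \right)^{(u_j)_m},
\end{equation*}
where $s_\ell^{(m)}$ is the $\ell$th generator acting on copy $m$. The exponent on copy $m$ is the same bit $(u_j)_m$ for both the X part and the Z part, and this is exactly why the X part and Z part of $s_\ell$ stay together. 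If the supports differed, qubit $m$ would see only the X part or only the Z part of $s_\ell$, which is generally not in $\mathcal{S}$. The regrouping is a reordering of single-qubit factors on distinct qubits, so it introduces at most an overall phase, and the product therefore lies in $\mathcal{S}^{\copymajorsum n}$ up to sign.

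\textbf{Phase bookkeeping.} We must check that the phase is genuinely harmless. Inside each copy $m$ and party $A_i$, the single-qubit factor $X^{a}Z^{b}$ coincides with $E(a,b)$ up to $i^{ab}$. The sign of $\mathcal{L}$'s generators is fixed by $\mathcal{S}^{\copymajorsum n}$ anyway, and \cref{def:success} ignores sign columns. So it suffices to show that some sign of $g^{(\ell,j)}$ lies in the group, which the computation above gives, using that $-I \notin \mathcal{S}^{\copymajorsum n}$ fixes the sign uniquely.

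\textbf{Completeness.} There are $kN$ operators $g^{(\ell,j)}$. They are independent, because their images $s^{(\ell,j)}$ are independent generators of $\mathcal{S}^{\copymajorsum k}$ (independence of $(A|B)$ gives $kN$ independent rows), and $\pi_{\mathcal{C}}^{\oplus N}$ is a homomorphism. Since $\mathcal{L}$ has exactly $kN$ generators, these operators generate $\mathcal{L}$. Hence $\pi_{\mathcal{C}}^{\oplus N}(\mathcal{L}) = \mathcal{S}^{\copymajorsum k}$.

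The main obstacle is the membership step, together with verifying cleanly that the copy-by-copy regrouping is legitimate. I would state a small lemma: for any $a,b \in \mathbb{F}_2^N$ and $u \in \mathbb{F}_2^n$, $\bigotimes_i (E(u,0)^{a_i}E(0,u)^{b_i})_{A_i} = \pm \prod_m (E(a,b)^{(m)})^{u_m}$, proved by commuting factors supported on different qubits.
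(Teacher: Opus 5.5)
Your proposal is correct and follows the paper's proof essentially step for step. It uses the same preimages $g^{(\ell,j)}$, the same copy-by-copy regrouping in which the common support keeps the X and Z parts of each $s_\ell$ together, the same factorwise application of $\pi_{\mathcal{C}}$, and the same closing count of $kN$ independent generators; the only difference is that the paper tracks the phase explicitly (using that $|w_j|$ is odd, with a deterministic Pauli frame fix when $|w_j| \equiv 3 \pmod 4$), which your sign-agnostic treatment absorbs.

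One caveat you inherit from the paper: $\mathcal{S}^{\copymajorsum n}\cap C(\mathcal{C}^{\oplus N})$ in general needs $kN+m$ independent generators, where $m$ is the number of independent generators of $\mathcal{S}^{\copymajorsum n}\cap\mathcal{C}^{\oplus N}$. This $m$ is nonzero, e.g., whenever a Z-type code generator acts on GHZ copies, since $E(0,b)_{A_1}E(0,b)_{A_2}\in\mathcal{G}_N^{\copymajorsum n}$, cf.\ \cref{thm:Z_spec}. So your count should be taken modulo $\ker \pi_{\mathcal{C}}^{\oplus N} = \mathcal{C}^{\oplus N}$. Alternatively, replace it by the observation that $\pi_{\mathcal{C}}^{\oplus N}(\mathcal{L})$ is abelian and contains the maximal abelian group $\mathcal{S}^{\copymajorsum k}$, hence equals it up to phases.
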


\begin{proof}
    Let a code $\mathcal{C}$ be given that admits a logical basis of purely X(Z)-type logical X(Z) operators with identical support. This is the special case of \cref{lem:pure_x_z} where $U=V$, i.e. the logical operators of $\mathcal{C}$ have the form

    \begin{equation}
    \label{eqn:pure_x_z_identica}
        \overline{X_j} = E(w_j, 0) \quad \quad \text{and} \quad \quad \overline{Z_j} = E(0, w_j) \quad j = 1,2, \hdots, k.
    \end{equation}

    We make no assumptions about $\mathcal{S}$, so

    \begin{equation}
        \mathcal{S}^{\copymajorsum k} = \langle \{ s^{(\ell, j)} : \ell = 1, \hdots, N; \ \ j = 1, \hdots, k  \} \rangle
    \end{equation}

    where
    
    \begin{equation}
        \label{eqn:copies_stab_gen_explicit}
        s^{(\ell, j)} = i^{a_{\ell} b_{\ell}^T} \bigotimes_{i=1}^N \left( X_j^{(a_{\ell})_i} Z_j^{(b_{\ell})_i} \right)_{A_i} .
    \end{equation}

    Following a similar procedure as in the proof of \cref{thm:css_states}, we define operators $g^{(\ell, j)}$. We show first that they belong to $\mathcal{L}$, and then that they map to the generators of $\mathcal{S}^{\copymajorsum k}$ under $\pi_{\mathcal{C}}^{\oplus N}$.

    \begin{equation}
        g^{(\ell, j)} = i^{a_{\ell} b_{\ell}^T} \bigotimes_{i=1}^N \left( E(w_j, 0)^{(a_{\ell})_i} E(0, w_j)^{(b_{\ell})_i} \right)_{A_i}
    \end{equation}

    We show now that either $g^{(\ell, j)}$ or $ - g^{(\ell, j)}$ is a member of $\mathcal{S}^{\copymajorsum n}$. Being a product of logical operators of $\mathcal{C}$ on each subsystem (up to a global phase), both $\pm g^{(\ell, j)} \in C (\mathcal{C}^{\oplus N})$. In the following, we use $|w_j|$ to denote the Hamming weight of $w_j$, mod 4.

    \begin{equation*}
        \begin{split}
            g^{(\ell, j)} &= i^{a_{\ell} b_{\ell}^T} \bigotimes_{i=1}^N \left( \bigotimes_{m=1}^n \left( X_m^{(w_j)_m} \right)^{(a_{\ell})_i} \left( Z_m^{(w_j)_m} \right)^{(b_{\ell})_i}  \right)_{A_i} \\
            &= i^{a_{\ell} b_{\ell}^T - |w_j| a_{\ell} b_{\ell}^T} \bigotimes_{i=1}^N \left( i^{a_{\ell} b_{\ell}^T} \bigotimes_{m=1}^n \left( X_m^{(a_{\ell})_i} Z_m^{(b_{\ell})_i} \right)  \right)^{(w_j)_m}_{A_i} \\
            &= i^{a_{\ell} b_{\ell}^T - |w_j| a_{\ell} b_{\ell}^T} \bigotimes_{m=1}^n \left( s^{(\ell, m)} \right)^{(w_j)_m} \in \mathcal{S}^{\copymajorsum n} \text{  if } |w_j| = 1
        \end{split}
    \end{equation*}

    If $|w_j| = 3$, we have $ - g^{(\ell, j)} \in \mathcal{S}^{\copymajorsum n}$ instead. The weight of $w_j$ must be odd, because $\overline{X_j}$ and $\overline{Z_j}$ anticommute. For simplicity, we will assume $|w_j| = 1$ for all $j$ in the rest of the proof. In practice, if some $|w_j| = 3$, a logical Pauli frame correction can be applied after the code generator measurements to flip the sign of the corresponding $g^{(\ell, j)}$ generators of the collective stabilizer state. This step would be deterministic, and independent of the random outcomes of the code generator measurements.

    Applying the logical map to $g^{(\ell, j)}$, we see

    \begin{equation*}
        \begin{split}
            \pi_{\mathcal{C}}^{\oplus N} (g^{(\ell, j)} ) &= i^{a_{\ell} b_{\ell}^T} \bigotimes_{i=1}^N \pi_{\mathcal{C}} \left( E(w_j, 0)^{(a_{\ell})_i} E(0, w_j)^{(b_{\ell})_i} \right)_{A_i} \\
            &= i^{a_{\ell} b_{\ell}^T} \bigotimes_{i=1}^N \left( X_j^{(a_{\ell})_i} Z_j^{(b_{\ell})_i} \right)_{A_i} \\
            &= s^{(\ell, j)} .
        \end{split}
    \end{equation*}

    Again, since the state generators of $\mathcal{S}^{\copymajorsum n}$ and the logical operators of $\mathcal{C}$ are independent of each other, the $g^{(\ell, j)}$ form a complete set of the $kN$ generators of $\mathcal{L}$. Hence we conclude that $\pi_{\mathcal{C}}^{\oplus N} (\mathcal{L}) = \mathcal{S}^{\copymajorsum k}$.
    
\end{proof}

\section{Conclusion and Future Work}
\label{sec:disc}

In this work, we developed a QEC-based protocol for distillation of $N$-qubit GHZ states on a network consisting of $N$ quantum processors that are each independently connected to a central state factory. Our scheme utilizes an $[[n,k,d]]$ stabilizer code to distill $n$ initial copies of the GHZ state at the central hub into $k$ copies shared by the end nodes. The steps of the protocol consist of standard QEC operations: local code generator measurements, classical communication of measurement outcomes, and decoding by each party according to their observed error syndrome. We give numerical results for a uniform depolarizing channel on all qubits, which show that the scheme improves the fidelity of the final GHZ state up to a depolarizing rate of $13.8\%$ for the 5-qubit code. For depolarizing rates up to $6 \%$, the larger codes prepare $k >1$ copies of a GHZ state with a higher fidelity than a single bare copy. It is of immediate experimental interest to benchmark performance with other noise models. In particular, a heterogeneous modular architecture with processors utilizing different qubit modalities provides inspiration for further numerical studies. It is possible that the particular asymmetries of such a noise model provide a basis for optimization of machine-learning-based QEC decoders that use the local error syndrome to maximize the final collective state infidelity rather than minimizing the single-party logical error rate. It will also be important to explore the robustness of our protocol against Pauli errors in the initial $n$ copies of the GHZ state. Generalizing the measurement-based Bell pair distillation protocol of \cite{shi2025_MeasurementBased} to $N$-qubit GHZ states may be helpful here, since the initial round of syndrome measurements is avoided.

After presenting our numerical results for GHZ states, we analytically proved that the same steps can successfully distill any $N$-qubit CSS stabilizer state, and we proved that a stronger necessary condition on the $[[n,k,d]]$ code implies successful distillation of all stabilizer states. While we provide the first explicit statement and formal proof of this fact for all CSS stabilizer states, previous works on QEC-based Bell pair and GHZ state distillation have relied on this same insight to guarantee the post-measurement logical state \cite{rengaswamy2022_Distilling,rengaswamy2024_Entanglement,wilde2010_Convolutional}. We provide our full proofs to establish that Bell pairs and GHZ states are a special case of the guarantee for all CSS stabilizer states, and to emphasize the importance of the existence of a purely X/Z logical basis for the $[[n,k,d]]$ code. The notation that we introduce to describe the parity check matrices of distributed stabilizer states may also be helpful for approaching other quantum networking problems with the stabilizer formalism.

\section*{Acknowledgements}
The authors are grateful to Narayanan Rengaswamy for helpful discussions and feedback on a draft of the manuscript. This material is based upon work supported by the Defense Advanced Research Projects Agency (DARPA) under Agreement No. HR0011-26-9-E114. Approved for public release; distribution is unlimited.

\clearpage

\printbibliography

\clearpage

\crefalias{section}{appendix}

\appendix

\section{Stabilizer State Update Rules}
\label{sec:stab_rules}

Consider an $N$-qubit stabilizer state $\ket{\psi}$ described by the stabilizer group $\mathcal{S} = \langle s_1, s_2, ..., s_N \rangle$. The update rules for unitary evolution and Pauli measurement are as follows:

\begin{enumerate}
    \item Unitary evolution: $\ket{\psi} \mapsto U \ket{\psi}$ is described by $s_j \mapsto U s_j U^{\dagger} \ \ \forall j \in \{1, 2, ..., N\}$. When $U$ is a Pauli operator, the $s_j$ are unchanged if they commute with $U$, and are only changed by a sign $s_j \mapsto - s_j$ if they anticommute. When $U$ belongs to the Clifford group, the $s_j$ remain Pauli operators, which enables the evolution to be simulated efficiently on a classical computer (the Gottesman-Knill theorem).
    \item Measurement of a Pauli operator $V$, obtaining the result $\pm 1$:
    \begin{enumerate}[label=(\alph*), leftmargin=4em]
        \item If $[s_j, V] = 0$ for all $j \in \{1, 2, ..., N\}$, then $V$ or $-V$ is already in $\mathcal{S}$. No update is needed, and the measurement result is deterministic.
        \item If $[s_k, V] \neq 0$ for exactly one $k \in \{1, 2, ..., N\}$, then remove $s_k$ from the set of stabilizer generators and replace it with $\pm V$.

            \begin{equation*}
                s_k \mapsto s_k' = \pm V, \quad s_j \mapsto s_j \ \forall j \neq k
            \end{equation*}
        
        \item If $[s_j, V] \neq 0$ for some subset of the stabilizer generators, i.e. $j \in \mathcal{J} \subseteq \{ 1, 2, ..., N\}$, then choose any one $s_k$ for $k \in \mathcal{J}$ to replace with $\pm V$. Also update the other non-commuting generators as $s_j \mapsto s_j s_k$ for all $j \in \mathcal{J} - \{k \}$ 

            \begin{equation*}
                s_k \mapsto s_k' = \mu V, \quad s_j \mapsto s_j' = s_j s_k \ \forall j \in \mathcal{J} - \{k \}, \quad s_i \mapsto s_i \ \forall i \in \{ 1, 2, ..., N\} - \mathcal{J}
            \end{equation*}
    \end{enumerate}
\end{enumerate} 

\section{The GHZ Map}
\label{sec:ghz_map}

When there are $n$ copies of the GHZ state, and each party has one qubit from each copy, the state vector has the form

\begin{equation}
    \label{eqn:ncopies_ghz_state}
    \ket{\text{GHZ}^{(N)}_n} = \ket{\text{GHZ}^{(N)}}^{\otimes n} = \frac{1}{\sqrt{2^n}} \sum_{x \in \mathbb{F}_2^n} \ket{x}_{A_1} \ket{x}_{A_2} ... \ket{x}_{A_N}.
\end{equation}

GHZ states are multipartite entangled states. For instance, it can be immediately read from \cref{eqn:ncopies_ghz_state} that a Z-basis measurement of the $A_1$ subsystem will project all subsystems into the same basis state, specified by the measurement outcome $x \in \mathbb{F}_2^n$. More generally, Rengaswamy et al. derive a ``GHZ Map" between any operation on the $n$ qubits of one subsystem (initially in the state $\ket{\text{GHZ}^{(N)}_n}$) and an equivalent operation acting jointly on the qubits of the other $N-1$ subsystems \cite[Lemma 4]{rengaswamy2024_Entanglement}. For our purposes in particular, the relevant operation on one subsystem is the projection realized by the measurement of a code generator. The equivalent operation is a projection of the qubits in the other subsystems whose form can be calculated directly, starting from \cref{eqn:ncopies_ghz_state} \cite{rengaswamy2024_Entanglement}.

We now state and discuss the GHZ Map results proven in \cite{rengaswamy2024_Entanglement} before showing that the same result can be obtained by applying the update rules of the stabilizer formalism.

\begin{theorem}
\label{thm:reng_N3}
    \cite[Theorem 6]{rengaswamy2024_Entanglement} Given $n$ copies of the 3-qubit GHZ state shared between Alice, Bob, and Charlie, measuring $E(a,b)_A = E([a,0,0], [b,0,0])$ on Alice's $n$ qubits and obtaining the result $\varepsilon \in \{ \pm 1 \}$ is equivalent to measuring the following with results $+1$ on the qubits of Bob and Charlie:

    \begin{equation*}
        \begin{split}
            \varepsilon (-1)^{ab^T} E(a,b)_B & \otimes E(a,0)_C  = \varepsilon (-1)^{ab^T} E([0,a,a], [0,b,0]) \text{ and } \\
            &\{ Z_{Bj} Z_{Cj} : j = 1,2,...,n \}.
        \end{split}
    \end{equation*}
\end{theorem}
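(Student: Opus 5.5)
We will prove the statement with the stabilizer update rules of \cref{sec:stab_rules}, in the same way as \cref{thm:Z_spec}. The initial state is $\mathcal{G}_3^{\copymajorsum n}$, generated by $Z_{Aj}Z_{Bj}$, $Z_{Bj}Z_{Cj}$ and $X_{Aj}X_{Bj}X_{Cj}$ for $j=1,\dots,n$. We measure $M := E(a,b)_A$. The operators $Z_{Bj}Z_{Cj}$ have no support on Alice, so they commute with $M$ and survive the measurement. This already accounts for the set $\{Z_{Bj}Z_{Cj}\}$ in the statement.

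The core step is to build a stabilizer of the initial state whose restriction to Alice equals $M$. Using the product rule \cref{lem:E_ab_properties}(b), set
\[
T := \prod_{j} (X_{Aj}X_{Bj}X_{Cj})^{a_j} \prod_j (Z_{Aj}Z_{Bj})^{b_j}.
\]
Up to phase this is $E(a,b)_A \otimes E(a,b)_B \otimes E(a,0)_C$. We fix the phase by writing
\[
\prod_j (X_{Aj}X_{Bj}X_{Cj})^{a_j} = E(a,0)_A E(a,0)_B E(a,0)_C
\]
and
\[
\prod_j (Z_{Aj}Z_{Bj})^{b_j} = E(0,b)_A E(0,b)_B.
\]
By \cref{lem:E_ab_properties}(b), $E(a,0)E(0,b) = i^{-ab^T}E(a,b)$ on each of $A$ and $B$. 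So $T = i^{-2ab^T} E(a,b)_A E(a,b)_B E(a,0)_C = (-1)^{ab^T} E(a,b)_A \otimes E(a,b)_B \otimes E(a,0)_C$, and $T$ lies in the stabilizer group. It commutes with $M$, since on Alice it is $M$ itself up to sign. Therefore $T$ is still a stabilizer after the measurement, by rule 2 of \cref{sec:stab_rules}: any stabilizer commuting with the measured operator remains. The post-measurement group also contains $\varepsilon M$. Multiplying $T$ by $\varepsilon M$ and using $M^2 = I$ removes Alice's part and gives $\varepsilon(-1)^{ab^T} E(a,b)_B \otimes E(a,0)_C$. The phase bookkeeping in this step is the only delicate point. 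I will recheck it against a small case, for example $a=b=e_1$: then $M = Y_{A1}$, and the claimed operator is $-\varepsilon Y_B X_C$, which agrees with $XXX \cdot ZZI = -\,Y Y X$ up to the sign conventions.

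Finally, we read "equivalent" as an equality of post-measurement states. Measuring on $A$ with outcome $\varepsilon$ maps the state to one stabilized by $\varepsilon M$ together with the listed $B,C$ operators. Measuring those $B,C$ operators instead with outcome $+1$ projects onto the same eigenspace. I plan to show this by the GHZ-map argument: since $\varepsilon M$ times the listed operator is a stabilizer of the initial state, the projector $(I+\varepsilon M)/2$ acts on the state exactly as $(I + \varepsilon(-1)^{ab^T}E(a,b)_B E(a,0)_C)/2$ does. Both projectors also have the same success probability on the initial state, namely $1/2$ when $M$ is not $\pm$ a stabilizer. This projector identity is the cleanest route, and the main obstacle is the sign tracking in the previous paragraph.
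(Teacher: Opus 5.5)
Your proof is correct, and it takes a noticeably shorter route than the paper does. The paper does not prove this cited result directly. Its own stabilizer derivation is \cref{thm:stab}, which applies the update rule of \cref{sec:stab_rules} explicitly. It fixes a pivot $k$ with $a_k=1$ and rewrites every generator that anticommutes with $E(a,b)_{A_1}$. It then multiplies the updated X-type and Z-type generators, with case analysis on $b_k$ and on the parity of $ab^T$ (the signs $\sigma,\nu,\mu,\lambda$), to reach $(-1)^{ab^T}E([a,a,\dots,a],[b,b,0,\dots,0])\in\mathcal{G}'$. Only then does it multiply by $\varepsilon E(a,b)_{A_1}$. For $N=3$ that intermediate operator is exactly your $T$.

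You instead observe that $T$ is already a product of \emph{pre}-measurement generators and commutes with the measured operator. This is the idea of \cref{thm:Z_spec}, applied to a group element rather than to individual generators. It removes the pivot and all the sign case analysis, since the sign comes from a single use of \cref{lem:E_ab_properties}(b). It also extends directly to $N$ parties; multiplying by $Z_{(A_t)j}Z_{(A_{t+1})j}$ stabilizers with $t\geq 2$ recovers the freedom in $b_2,\dots,b_N$ of \cref{thm:reng_N}.

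Your last paragraph also proves more than the paper's membership argument. Write $O$ for the induced operator. Then $T=\varepsilon M O$ and $T\ket{\psi}=\ket{\psi}$ give $O\ket{\psi}=\varepsilon M\ket{\psi}$, hence $\tfrac{1}{2}(I+\varepsilon M)\ket{\psi}=\tfrac{1}{2}(I+O)\ket{\psi}$. This is exactly the identity \cref{eqn:ghz_map_N3_explicit} that the paper takes from Rengaswamy et al.\ (the $Z_{Bj}Z_{Cj}$ projectors act trivially). What the paper's route gives in exchange is an explicit generating set for the whole post-measurement group.

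Two small points of presentation. First, rule 2 of \cref{sec:stab_rules} is stated for generators, not arbitrary group elements. Justify the survival of $T$ instead by noting that $T$ commutes with the projector $\tfrac{1}{2}(I+\varepsilon M)$ and so stabilizes the post-measurement state. Alternatively, let the projector identity carry the whole proof, since it needs no update rules at all. Because it equates unnormalized vectors, it already gives equal outcome probabilities, so your separate remark about probability $1/2$ is unnecessary.

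Second, your sanity check needs no hedge. We have $X_{A1}X_{B1}X_{C1}\cdot Z_{A1}Z_{B1}=(-iY_{A1})(-iY_{B1})X_{C1}=-Y_{A1}Y_{B1}X_{C1}$, so $\varepsilon Y_{A1}\cdot(-Y_{A1}Y_{B1}X_{C1})=-\varepsilon Y_{B1}X_{C1}$ exactly. This matches the paper's second example.
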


The exact meaning of the equivalence mentioned in the statement of \cref{thm:reng_N3} is made explicit by the equality in \cref{eqn:ghz_map_N3_explicit} below.

\begin{equation}
    \label{eqn:ghz_map_N3_explicit}
    \begin{split}
    \frac{I_{3n} + \varepsilon E([a,0,0],[b,0,0])}{2} \ket{\text{GHZ}^{(3)}_n} = \ &\frac{I_{3n} + \varepsilon (-1)^{ab^T} E([0,a,a], [0,b,0])}{2} \ \cdot  \\ & \quad \quad \left( \prod_{j=1}^n \frac{I_{3n} + Z_{Bj} Z_{Cj}}{2} \right) \ket{\text{GHZ}^{(3)}_n}
    \end{split}
\end{equation}

We know from the stabilizer group of the GHZ state that $\ket{\text{GHZ}^{(3)}_n}$ is a $+1$ eigenstate of all the $Z_{Bj} Z_{Cj}$ operators. Hence the product over those projectors acts as as the identity on the GHZ state and can be ignored. Furthermore, using the fact that projectors square to themselves, one can show that \cref{eqn:ghz_map_N3_functional} follows from \cref{eqn:ghz_map_N3_explicit}.

\begin{equation}
    \label{eqn:ghz_map_N3_functional}
    \left( \frac{I_n + \varepsilon E(a,b)}{2} \right)_A \ket{\text{GHZ}^{(3)}_n} = \left( \frac{I_n + \varepsilon E(a,b)}{2} \right)_A \otimes \left( \frac{I_{2n} + \varepsilon (-1)^{ab^T} E([a,a],[b,0])}{2} \right)_{BC} \ket{\text{GHZ}^{(3)}_n}
\end{equation}

\cref{eqn:ghz_map_N3_functional} is the clearest statement of the functional purpose of \cref{thm:reng_N3}. Measurement of $E(a,b)$ on the $A$ qubits simultaneously projects the $B$ and $C$ qubits into an eigenspace of a $2n$-qubit operator $E([a,a],[b,0])$. The claim is generalized to arbitrary $N$ by \cref{thm:reng_N}. Of course, since the GHZ state is permutation invariant, the choice of $A$ (or $A_1$) as the measured subsystem is made purely for notational clarity.

\begin{theorem}
\label{thm:reng_N}
    \cite[Theorem 7]{rengaswamy2024_Entanglement} Given $n$ copies of the $N$-qubit GHZ state with subsystems $A_1, A_2, ..., A_N$, measuring $E(a,b)_{A_1}$ on the $n$ qubits of subystem $A_1$ and obtaining the result $\varepsilon \in \{ \pm 1 \}$ is equivalent to measuring the following with results $+1$ on the qubits of the remaining $(N-1)$ subystems:
    
    \begin{equation*}
        \begin{split}
            \varepsilon (-1) & ^{\left( b + \sum_{i=1}^{N-2} \sum_{j>i}^{N-1} b_i * b_j \right) a^T} \bigotimes_{t=2}^N E(a,b_t)_{A_t} = \varepsilon (-1)^{\left( b + \sum_{i=1}^{N-2} \sum_{j>i}^{N-1} b_i * b_j \right) a^T} E([0,a,a,...,a], [0, b_2, b_3, ..., b_N]) \\
            &\quad \quad \quad \quad \quad \text{and }\{ Z_{(A_2)j} Z_{(A_3)j}, Z_{(A_3)j} Z_{(A_4)j}, ..., Z_{(A_{N-1})j} Z_{(A_N)j} : j = 1,2,...,n \},
        \end{split}
    \end{equation*}

    where $b_2, b_3, ..., b_N \in \mathbb{F}_2^n$ satisfy $b_2 \oplus b_3 \oplus ... \oplus b_n = b$, $\oplus$ denotes addition modulo 2, and $x * y$ denotes the element-wise product of 2 vectors.
    
\end{theorem}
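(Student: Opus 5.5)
The plan is to follow the stabilizer-formalism route promised at the start of this appendix rather than the state-vector computation of \cite{rengaswamy2024_Entanglement}. The statement will be reduced to a single membership claim in the post-measurement stabilizer group, and that claim will be verified by multiplying in suitable $ZZ$ generators, in the same way as in the proof of \cref{thm:Z_spec}.

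First, consider measuring $E(a,b)_{A_1}$ on $\ket{\text{GHZ}_n^{(N)}}$ with outcome $\varepsilon$, and call the resulting group $\mathcal{G}'$. By the update rules of \cref{sec:stab_rules}, $\varepsilon E(a,b)_{A_1}\in\mathcal{G}'$. The only generators of $\mathcal{G}_N^{\copymajorsum n}$ that can anticommute with the measured operator are the $Z_{(A_1)j}Z_{(A_2)j}$ with $a_j=1$ and the global $X$ generators with $b_j=1$. The generators $Z_{(A_t)j}Z_{(A_{t+1})j}$ with $t\ge 2$ are supported away from $A_1$, so they commute with $E(a,b)_{A_1}$ and survive. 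This already shows that the $ZZ$ operators listed in the statement are stabilizers, i.e.\ their measurement yields $+1$.

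Second, I will show that the operator $P:=\varepsilon(-1)^{\phi}\bigotimes_{t\ge 2}E(a,b_t)_{A_t}$, with $\phi$ the phase in the statement, lies in $\mathcal{G}'$. For this I multiply $\varepsilon E(a,b)_{A_1}$ by the original stabilizer $\prod_j (X_{(A_1)j}\cdots X_{(A_N)j})^{a_j}$ and by $\prod_j (Z_{(A_1)j}Z_{(A_t)j})^{(b_t)_j}$, summed appropriately over $t$. The first factor moves the $X$-support $a$ from $A_1$ onto every $A_t$, and the second moves the $Z$-support $b=\oplus_t b_t$ off $A_1$ and distributes it as $b_t$ on $A_t$. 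The product should then be supported only on $A_2,\dots,A_N$, with Pauli content $\bigotimes_t E(a,b_t)$.

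Before this argument goes through, one point needs checking: these auxiliary factors must lie in $\mathcal{G}'$ and not merely in the original group. The product I actually use has to commute with $E(a,b)_{A_1}$. I will check this directly: the combined multiplier $M$ satisfies $M\cdot E(a,b)_{A_1}\propto P$, and $P$ commutes with $E(a,b)_{A_1}$ by disjoint support. Hence $M$ commutes with the measured operator, and $M$ is in the original group. Any element of the original group commuting with the measured operator survives the measurement, since rule 2(c) only alters the anticommuting part of the group. Therefore $P\in\mathcal{G}'$ up to a sign, and because $P$ is supported on $A_2\cdots A_N$, the equivalence with the projector statement (the analogue of \cref{eqn:ghz_map_N3_explicit}) follows as discussed after \cref{thm:reng_N3}.

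The main obstacle is the sign bookkeeping, namely recovering exactly the phase $(-1)^{(b+\sum_{i<j}b_i*b_j)a^T}$. I will use \cref{lem:E_ab_properties}(b) repeatedly, writing each $E(\cdot,\cdot)$ factor qubitwise. The $(-1)^{ba^T}$ term arises from reordering $X^{a}$ past $Z^{b}$ on $A_1$ when the $X$'s cancel, which converts the $i^{ab^T}$ phase. The cross terms $b_i*b_j$ come from the fact that on each $A_t$ one combines $X^{a}$ with $Z^{b_t}$ and a normalization $i^{ab_t^T}$, whose sum over $t$ differs from $i^{ab^T}$ by $i^{2\sum_{i<j}(b_i*b_j)a^T}$, since $\sum_t b_t=b+2\sum_{i<j}b_i*b_j$ over the integers. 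I will carry this out in the case $N=3$ first, checking it against \cref{thm:reng_N3}, and then do the induction on $N$.
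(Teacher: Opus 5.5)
Your argument is correct, and it reaches the statement by a different route from the paper's. The paper does not prove \cref{thm:reng_N} directly. It proves \cref{thm:stab}, which covers only the splitting $b_2=b$, $b_3=\dots=b_N=0$, by running update rule 2(c) to write down the new generating set $W'$. It then multiplies updated generators into the products $A$ and $B$, with case analysis on $b_k$ and on the parity of $ab^T$, and with $a,b\neq 0$ assumed. The general splitting and the $b_i*b_j$ phase are only asserted to follow by multiplying in the $Z_{(A_t)j}Z_{(A_{t+1})j}$ with $t\ge 2$.

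You never compute the updated tableau. You exhibit one element of the original group, $M=\bigl(\bigotimes_{t=1}^N E(a,0)_{A_t}\bigr)\bigl(E(0,b)_{A_1}\otimes\bigotimes_{t\ge 2}E(0,b_t)_{A_t}\bigr)$, observe that it commutes with $E(a,b)_{A_1}$, and use survival of commuting stabilizers, the same principle as in \cref{thm:Z_spec}. The sign then takes one line. The $A_1$ factor gives $i^{ab^T}(-1)^{ab^T}$ and each $A_t$ factor gives $i^{-ab_t^T}E(a,b_t)$. Together these give exactly $\varepsilon(-1)^{(b+\sum_{i<j}b_i*b_j)a^T}\bigotimes_{t\ge 2}E(a,b_t)_{A_t}$, with no case split and for every splitting at once. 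The paper's route has the merit of displaying the whole updated generating set, as its worked examples do. Yours is shorter and actually proves the general form that the paper only sketches.

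Three small repairs are needed.
\begin{itemize}
\item The identity $\sum_t b_t=b+2\sum_{i<j}b_i*b_j$ is false over the integers once three of the $b_t$ share a $1$; at that coordinate the two sides are $3$ and $7$. What you need, and what holds, is the congruence modulo $4$. At a coordinate where $m$ of the $b_t$ equal $1$, one has $m\equiv (m \bmod 2)+m(m-1) \pmod 4$, and only the exponent of $i$ sees it. So no induction on $N$ is required.
\item Rule 2(c) is stated for generators, and your $M$ does contain anticommuting ones, e.g.\ $Z_{(A_1)j}Z_{(A_2)j}$ with $a_j=1$. Justify survival either by noting that $M$ uses an even number of them, or more simply by $M\Pi\ket{\psi}=\Pi M\ket{\psi}=\Pi\ket{\psi}$ for the measurement projector $\Pi$.
\item The discussion after \cref{thm:reng_N3} derives the functional identity from the explicit one; it does not derive either from stabilizer membership. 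Your construction gives the explicit identity directly. Since $P=\varepsilon E(a,b)_{A_1}M$ and $M\ket{\text{GHZ}_n^{(N)}}=\ket{\text{GHZ}_n^{(N)}}$, you get $\tfrac{1}{2}(I+P)\ket{\text{GHZ}_n^{(N)}}=\tfrac{1}{2}(I+\varepsilon E(a,b)_{A_1})\ket{\text{GHZ}_n^{(N)}}$, and the $ZZ$ projectors act trivially.
\end{itemize}
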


\begin{remark}
    To simplify the expression, one can choose $b_2 = b$ and $b_3 = ... = b_N = 0$, so that $b_i * b_j = 0$ always, making the induced stabilizer $\varepsilon (-1)^{ab^T} E([0,a,a,...,a], [0,b,0,0,...,0])$. 
\end{remark}

This characterization of the simultaneous projection of the unmeasured subsystems when a Pauli operator is measured on one subsystem admits a straightforward interpretation through the lens of the stabilizer formalism. We now introduce a different notation for the generators of $\mathcal{G}_N^{\copymajorsum n}$ than we used in the main text. The full stabilizer group is generated by $n$ independent generating sets - one for each copy indexed by $i$ in \cref{eqn:ghz_stab_Nn}. Here, $e_j \in \mathbb{F}_2^n$ denotes the canonical basis vector with a $1$ in the $j$th position and $0$ elsewhere.

\begin{equation}
\label{eqn:ghz_stab_Nn}
    \mathcal{G}_N^{\copymajorsum n} = \langle E([e_j, e_j, ..., e_j], 0), E(0, [e_j, e_j, 0, ..., 0]), ..., E(0, [0, ..., 0, e_j, e_j]) : j = 1,2,...,N \rangle
\end{equation}

Recall that \cref{thm:reng_N} asserts that when the $n$-qubit operator $E(a,b)_{A_1}$ is measured on the $A_1$ subsystem and the result $\varepsilon = \pm 1$ is obtained, the non-measured parties $A_2, ..., A_N$ are projected into the $\varepsilon (-1)^{ab^T}$ eigenspace of the operator $E([0,a,a,...,a], [0,b,0,...,0])$. We now establish with \cref{thm:stab} that this induced projection can be viewed a natural consequence of the stabilizer formalism. The full expression in \cref{thm:reng_N}, including the arbitrary choice of $b_2, b_3, ..., b_N$ and the additional sign factors involving $b_i * b_j$, can be obtained by additionally taking the product with any combination of the $Z_{(A_t)i} Z_{(A_{t+1})i}$ operators $(t \geq 2)$, which are also generators for the post-measurement stabilizer group.

\begin{theorem}
    \label{thm:stab}
    After the local code stabilizer $E(a,b)_{A_1}$ is measured on the $A_1$ subsystem ($a,b \in \mathbb{F}_2^n$), obtaining the result $\varepsilon = \pm 1$, the post-measurement state is the stabilizer state for the updated stabilizer group $\mathcal{G}'$. There is an operator in the updated stabilizer group with joint support on all unmeasured subsystems that has the form $\varepsilon (-1)^{ab^T} E([0,a,a,...,a], [0,b,0,...,0])$.
\end{theorem}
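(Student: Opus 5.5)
Following the pattern of the proof of \cref{thm:Z_spec}, I would exhibit the desired operator as a product of elements already known to lie in $\mathcal{G}'$. There are two ingredients: the measured operator $\varepsilon E(a,b)_{A_1}$, which lies in $\mathcal{G}'$ by the measurement update rule of \cref{sec:stab_rules}, and a product of original GHZ generators that commutes with $E(a,b)_{A_1}$ and therefore survives the update unchanged.

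The key step is to find an element $g \in \mathcal{G}_N^{\copymajorsum n}$ that acts on $A_1$ as $E(a,b)$, up to phase, and commutes with $E(a,b)_{A_1}$. The natural candidate is
\[
g = \prod_{j=1}^n E([e_j,\dots,e_j],0)^{a_j} \cdot \prod_{j=1}^n E(0,[e_j,e_j,0,\dots,0])^{b_j}.
\]
The first factor equals $E([a,a,\dots,a],0)$. The second factor equals $E(0,[b,b,0,\dots,0])$, since the $Z$-type generators multiply without phase. Their product is, by \cref{lem:E_ab_properties}(b),
\[
g = \pm E([a,a,\dots,a],[b,b,0,\dots,0]) = \pm E(a,b)_{A_1} \otimes E(a,b)_{A_2} \otimes E(a,0)_{A_3} \otimes \cdots \otimes E(a,0)_{A_N}.
\]
The $A_1$ factor of $g$ commutes with $E(a,b)_{A_1}$ trivially, and the other factors have disjoint support from it. Hence $g$ commutes with the measured operator, and $g$ is a product of stabilizers that commute with it. I still have to justify that $g$ survives into $\mathcal{G}'$. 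Measurement case (c) of \cref{sec:stab_rules} modifies individual generators, but any element of the old group commuting with $V$ remains in the new group. This holds because the new group is generated by $\{s_j : s_j \text{ commutes with } V\}$, $\{s_j s_k\}$, and $\pm V$, and the commuting part of the old group is exactly the group generated by the first two sets. I would state this short argument explicitly.

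Then I would form $(\varepsilon E(a,b)_{A_1}) \cdot g$. Using $E(a,b)^2 = I$, which holds because $E(a,b)$ is Hermitian and unitary, the $A_1$ factor cancels. The result is a phase times $E([0,a,\dots,a],[0,b,0,\dots,0])$, which is an element of $\mathcal{G}'$ with support only on $A_2,\dots,A_N$.

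The main obstacle is the phase bookkeeping needed to confirm that the sign is exactly $\varepsilon(-1)^{ab^T}$. I would expand $g$ as $E([a,\dots,a],0)\,E(0,[b,b,0,\dots])$, which by \cref{lem:E_ab_properties}(b) on $2n$ nontrivial coordinates equals $i^{-2ab^T}E([a,a,\dots],[b,b,0,\dots]) = (-1)^{ab^T}E(a,b)_{A_1}\otimes E(a,b)_{A_2}\otimes\cdots$, where the $i^{ab^T}$ normalizations of the $A_1$ and $A_2$ factors combine. Multiplying on the left by $\varepsilon E(a,b)_{A_1}$ then gives $\varepsilon(-1)^{ab^T}E([0,a,\dots,a],[0,b,0,\dots,0])$. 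The fact that $\mathcal{G}'$ contains no $-I$ serves as a consistency check on this sign. Finally, I would note that multiplying by $Z_{(A_t)i}Z_{(A_{t+1})i}$ for $t \ge 2$ redistributes $b$ among $b_2,\dots,b_N$ and produces the $b_i * b_j$ phases of \cref{thm:reng_N}.
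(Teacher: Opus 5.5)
Your proof is correct, and it takes a cleaner route than the paper's. The paper applies update rule (c) explicitly. It fixes an index $k$ with $a_k=1$, replaces $E(0,[e_k,e_k,0,\dots,0])$ by $\varepsilon E(a,b)_{A_1}$, and multiplies every other anticommuting generator by it. It then forms the products $A$ (updated X-type generators over the support of $a$) and $B$ (updated Z-type generators over the support of $b$). Reaching $AB=(-1)^{ab^T}E([a,\dots,a],[b,b,0,\dots,0])$ requires case-by-case tracking of the factors $(-1)^{\delta_{jk}}$, $\sigma$, $\nu$, $\mu$ and $\lambda$, plus a separate remark for $a=0$ or $b=0$.

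That operator $AB$ is exactly your $g$. You instead note that $g$ already lies in the original group $\mathcal{G}_N^{\copymajorsum n}$ and commutes with the measured operator. You also prove once that this commuting subgroup is generated by the untouched generators together with the products $s_js_k$, so it survives the update. Equivalently, $g$ commutes with the projector $(I+\varepsilon V)/2$ and therefore still stabilizes the projected state.

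The sign then follows from a single application of \cref{lem:E_ab_properties}(b), using $i^{-2ab^T}=(-1)^{ab^T}$, with no case analysis on $b_k$ or on the parity of $ab^T$. Your argument is the natural extension of the proof of \cref{thm:Z_spec} from individual generators to arbitrary commuting group elements. The paper's route buys only an explicit generating set for $\mathcal{G}'$, which the statement does not require.
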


\begin{proof}

Initially, we have $n$ copies of an $N$-qubit GHZ state shared across subsystems $A_1, A_2, ..., A_N$. The collective state is the stabilizer state of the group $\mathcal{G}_N^{\copymajorsum n}$, whose generators are given in \cref{eqn:ghz_stab_Nn}.

The measured operator $E(a,b)_{A_1} = E([a,0,...,0], [b,0,...,0])$ becomes a generator of $\mathcal{G}'$ with the observed sign $\varepsilon$, by the stabilizer update rules discussed in \cite{gottesman1998_Heisenberg}

We now calculate the other generators of the updated stabilizer group $\mathcal{G}'$. Only the generators of $\mathcal{G}_N^{\copymajorsum n}$ which anticommute with $E(a,b)_{A_1}$ are changed; let $W$ denote the set of such generators, which we can characterize exactly.

\begin{equation*}
    W = \{ E(0, [e_j, e_j, 0, ..., 0]) : j \text{ s.t. } a_j = 1  \} \cup \{ E([e_j, e_j, ..., e_j], 0) : j \text{ s.t. } b_j = 1 \} 
\end{equation*}

To be explicit, note that we can express the generating set of $\mathcal{G}_N^{\copymajorsum n}$ in terms of $W$ in the following way.

\begin{equation*}
    \begin{split}
    \mathcal{G}_N^{\copymajorsum n} = \langle W \cup & \{ E(0, [e_j, e_j, 0, ..., 0]) : j \text{ s.t. } a_j = 0  \} \cup \{ E([e_j, e_j, ..., e_j], 0) : j \text{ s.t. } b_j = 0 \} \cup \\
    & \{ E(0, [0, e_j, e_j, 0, ..., 0]), E(0, [0, 0, e_j, e_j, 0, ..., 0]), ..., E(0, [0, ..., 0, e_j, e_j]) : j = 1, 2, ..., N  \} \rangle
    \end{split}
\end{equation*}

In this discussion we will assume $a, b \neq 0$ for maximum generality. The cases with $a=0$ or $b=0$ are straightforward to approach with the same strategy, and the case $a=b=0$ is trivial because $E(0,0)$ is the identity. Since $W$ is the set of generators of $\mathcal{G}_N^{\copymajorsum n}$ which anticommute with the measured operator, the stabilizer update rules \cite{gottesman1998_Heisenberg} give us

\begin{equation*}
    \begin{split}
    \mathcal{G}' = \langle W' \cup & \{ E(0, [e_j, e_j, 0, ..., 0]) : j \text{ s.t. } a_j = 0  \} \cup \{ E([e_j, e_j, ..., e_j], 0) : j \text{ s.t. } b_j = 0 \} \cup \\
    & \{ E(0, [0, e_j, e_j, 0, ..., 0]), E(0, [0, 0, e_j, e_j, 0, ..., 0]), ..., E(0, [0, ..., 0, e_j, e_j]) : j = 1, 2, ..., N  \} \rangle
    \end{split}
\end{equation*}

where

\begin{equation*}
    \begin{split}
        W' = \{ \varepsilon E([a,0,...,0], & [b,0,...,0]) \} \cup \{ E(0, [e_j + e_k, e_j + e_k, 0, ..., 0] : j \neq k \text{ s.t. } a_j = 1 \} \cup \\ &\{ (-1)^{\delta_{jk}} E([e_j, e_j, ..., e_j], [e_k, e_k, 0, ..., 0]) : j \text{ s.t. } b_j = 1 \}.
    \end{split}
\end{equation*}

Here, $k$ is one fixed index such that $a_k = 1$, which always exists since we assume $a \neq 0$, and $\delta_{jk}$ is the Kroenecker delta. Now, with the generators of the post measurement stabilizer group $\mathcal{G}'$ explicitly defined, we can go about expressing the desired operator, $\varepsilon (-1)^{ab^T} E([0, a, ..., a], [0, b, 0, ..., 0])$, as a product of these generators. We will refer to the generators of the form $E(0, [e_j, e_j, 0, ..., 0])$ and $E(0, [e_j + e_k, e_j + e_k, 0, ..., 0])$ as ``Z-type" generators, and we let ``X-type" generators refer to $E([e_j, e_j, ..., e_j], 0)$ and $(-1)^{\delta_{jk}} E([e_j, e_j, ..., e_j], [e_k, e_k, 0, ..., 0])$.

Consider first the operator $B$, the product of all the Z-type stabilizers corresponding to a value of $j$ in the support of $b$. By closure of the stabilizer group, $B \in \mathcal{G}'$.

\begin{equation*}
\begin{split}
    B &= \left( \prod_{\substack{j \text{ s.t.} \\ a_j = 0, b_j = 1}} E(0, [e_j, e_j, 0, ..., 0]) \right) \left( \prod_{\substack{j \neq k \text{ s.t. } \\ a_j = b_j = 1}} E(0, [e_j+e_k, e_j + e_k, 0, ..., 0]) \right) \\
    &= E(0, [c, c, 0, ..., 0]) \left( \prod_{\substack{j \neq k \text{ s.t.} \\ a_j = b_j = 1}} E(0, [e_j+e_k, e_j + e_k, 0, ..., 0]) \right) \quad \quad \quad c = \bigoplus_{\substack{j \text{ s.t.} \\ a_j=0 \\ b_j=1}} e_j
\end{split}
\end{equation*}

We now introduce a shorthand for taking an integer modulo 2, $\widetilde{x} := x \text{ mod } 2$, and we note the following identity which will be helpful in evaluating the second product.

\begin{equation*}
    \widetilde{|\{ j : j \neq k \text{ and } a_j = b_j = 1 \}|}
        =
        \left\{
        \begin{array}{ll}
            \widetilde{ab^T} & \text{if } b_k = 0 \\
            \widetilde{\left(ab^T + 1\right)} & \text{if } b_k = 1
        \end{array}
        \right\}
        =
        \widetilde{\left(ab^T + b_k\right)}
\end{equation*}

The second product evaluates to

\begin{equation*}
    \prod_{\substack{j \neq k \text{ s.t.} \\ a_j = b_j = 1}} E(0, [e_j+e_k, e_j + e_k, 0, ..., 0]) = E(0, [d \oplus \widetilde{\left(  ab^T + b_k \right)} e_k, d \oplus \widehat{\left(  ab^T + b_k \right)}e_k, 0, ..., 0)])  \quad \quad \quad d = \bigoplus_{\substack{j\neq k \text{ s.t.} \\ a_j=b_j=1}} e_j
\end{equation*}

Note that by the definitions of $c$ and $d$, their sum is

\begin{equation*}
    c \oplus d = \bigoplus_{\substack{j \neq k \text{ s.t.} \\ b_j = 1}} e_j =  \left\{
        \begin{array}{ll}
            b & \text{if } b_k = 0 \\
            b \oplus e_k & \text{if } b_k = 1
        \end{array}
        \right\}
        = b \oplus b_k e_k .
\end{equation*}
    
This allows us to arrive at a closed form for $B$, the product of all Z type generators with $j$ in the support of $b$.

\begin{align*}
    B &= E(0, [c,c,0,...,0]) \cdot E(0, [d \oplus \widetilde{\left( ab^T + b_k \right)} e_k, d \oplus \widetilde{\left( ab^T + b_k \right)} e_k, 0, ..., 0]) \\
    &= E(0, [c \oplus d \oplus \widetilde{\left( ab^T + b_k \right)} e_k, c \oplus d \oplus \widetilde{\left( ab^T + b_k \right)} e_k, 0, ..., 0]) \\
    &= E(0, [b \oplus b_k e_k \oplus \widetilde{\left( ab^T + b_k \right)} e_k, b \oplus b_k e_k \oplus \widetilde{\left( ab^T + b_k \right)} e_k, 0, ..., 0]) \\
    &= E(0, [b \oplus \widetilde{\left( ab^T \right)} e_k, b \oplus \widetilde{\left( ab^T \right)} e_k, 0, ..., 0])
\end{align*}

We now calculate $A$, the product of all the X-type generators corresponding to $j$ in the support of $a$. Again, of course, $A \in \mathcal{G}'$ by closure.

\begin{equation*}
\begin{split}
    A &= \left( \prod_{\substack{j \text{ s.t.} \\ a_j = 1, b_j = 0}} E([e_j, e_j, ..., e_j], 0) \right) \left( \prod_{\substack{j \text{ s.t. } \\ a_j = b_j = 1}} (-1)^{\delta_{jk}}E([e_j, e_j, ..., e_j], [e_k, e_k, 0, ..., 0]) \right) \\
    &= E([f, f, ..., f], 0) \left( \prod_{\substack{j \text{ s.t. } \\ a_j = b_j = 1}} (-1)^{\delta_{jk}}E([e_j, e_j, ..., e_j], [e_k, e_k, 0, ..., 0]) \right)  \quad \quad \quad f = \bigoplus_{\substack{j \text{ s.t.} \\ a_j=1 \\ b_j=0}} e_j \\
    &= E([f, f, ..., f], 0) \left( \prod_{\substack{j \text{ s.t. } \\ a_j = b_j = 1}} (-1)^{\delta_{jk}} \right)  \left( \prod_{\substack{j \text{ s.t. } \\ a_j = b_j = 1}} E([e_j, e_j, ..., e_j], [e_k, e_k, 0, ..., 0]) \right) \\
\end{split}
\end{equation*}

First we can reason out what the product over $(-1)^{\delta_{jk}}$ must be. Recall that $k$ was defined as some index for which $a_k = 1$. Whether or not a term with $\delta_{jk} = 1$ is included in the product is determined by whether or not $b_k = 1$, since the product runs over $j$ such that $a_j =1 \text{ and } b_j = 1$. Any terms with $j \neq k$ contribute nothing to the product, since $(-1)^0 = 1$. If $b_k = 1$, then we get one $(-1)^1$ term, giving an overall minus sign. In other words,

\begin{equation*}
    \prod_{\substack{j \text{ s.t. } \\ a_j = b_j = 1}} (-1)^{\delta_{jk}} =  \left\{
        \begin{array}{ll}
            +1 & \text{if } b_k = 0 \\
            -1 & \text{if } b_k = 1
        \end{array}
        \right\}
        = (-1)^{b_k}.
\end{equation*}

Now we consider the product of Pauli operators. Defining $g$ as the bitstring with a 1 at all $j$ such that $a_j = b_j = 1$, and 0 elsewhere, we can see that $f \oplus g = a$ always. This is contrasted with the bistrings we defined while calculating $B$, which only obeyed $c \oplus d = b$ if $b_k = 0$. It can be shown following the rules for multiplication of Pauli operators defined in \cref{sec:bkgd_pauli}, that

\begin{equation*}
    \prod_{\substack{j \text{ s.t. } \\ a_j = b_j = 1}} E([e_j, e_j, ..., e_j], [e_k, e_k, 0, ..., 0]) = \sigma E([g, g, ..., g], [\widetilde{\left( ab^T \right)} e_k, \widetilde{\left( ab^T \right)} e_k, 0, ..., 0]),
\end{equation*}

where

\begin{equation*}
    \sigma =  \left\{
        \begin{array}{ll}
            \prod_{\substack{j \text{ s.t. } \\ a_j = b_j = 1}} (-1)^{\delta_{jk}} & \text{if } \widetilde{ab^T} = 0 \\
            +1 & \text{if } \widetilde{ab^T} = 1
        \end{array}
        \right\}
        = \left\{
        \begin{array}{ll}
            (-1)^{b_k} & \text{if } \widetilde{ab^T} = 0 \\
            +1 & \text{if } \widetilde{ab^T} = 1
        \end{array}
        \right. .
\end{equation*}

The overall sign in front of $E([g, g, ..., g], [\widetilde{\left( ab^T \right)} e_k, \widetilde{\left( ab^T \right)} e_k, 0, ..., 0])$ is therefore

\begin{equation*}
    \nu := (-1)^{b_k} \sigma = \left\{
        \begin{array}{ll}
            +1 & \text{if } \widetilde{ab^T} = 0 \\
            (-1)^{b_k} & \text{if } \widetilde{ab^T} = 1
        \end{array}
        \right\} = \left\{
        \begin{array}{ll}
            -1 & \text{if } \widetilde{ab^T} = 1 \text{ and } b_k = 1 \\
            +1 & \text{else}
        \end{array}
        \right.
\end{equation*}

We can now evaluate the closed form of $A$ as a Pauli operator in the $E$ notation.

\begin{equation*}
\begin{split}
    A &= \nu E([f, f, ..., f], 0) \cdot E([g, g, ..., g], [\widetilde{\left( ab^T \right)} e_k, \widetilde{\left( ab^T \right)} e_k, 0, ..., 0]) \\
    &= \nu \mu E([a,a,...,a], [\widetilde{\left( ab^T \right)} e_k, \widetilde{\left( ab^T \right)} e_k, 0, ..., 0])
\end{split}
\end{equation*}

Where the sign $\mu$ depends on whether $f_k = 1$.

\begin{equation*}
    \mu = (-1)^{(\widetilde{ab^T}) \cdot f_k} =  \left\{
        \begin{array}{ll}
            -1 & \text{if } \widetilde{ab^T} = 1 \text{ and } b_k = 0 \\
            +1 & \text{else}
        \end{array}
        \right.
\end{equation*}

If $\widetilde{ab^T} = 1$, exactly one of $\mu$ and $\nu$ will be $-1$, controlled by whether $b_k$ is 0 or 1. And if $\widetilde{ab^T} = 0$, then $\mu = \nu = +1$. Hence their overall product gives us the $(-1)^{ab^T}$ that we expect from the GHZ map.

\begin{equation*}
    A = (-1)^{ab^T} E([a,a,...,a], [\widetilde{\left( ab^T \right)} e_k, \widetilde{\left( ab^T \right)} e_k, 0, ..., 0])
\end{equation*}

We now calculate the product $AB$ before finally multiplying by the measured operator to arrive at the claim of \cref{thm:stab}.

\begin{align*}
    AB &= (-1)^{ab^T} E([a,a,...,a], [\widetilde{\left( ab^T \right)} e_k, \widetilde{\left( ab^T \right)} e_k, 0, ..., 0]) \cdot E(0, [b \oplus \widetilde{\left( ab^T \right)} e_k, b \oplus \widetilde{\left( ab^T \right)} e_k, 0, ..., 0]) \\
    &= (-1)^{ab^T} \lambda E([a,a,...,a], [b,b,0,...,0])
\end{align*}

Where the sign $\lambda$ emerging from the multiplication rule is always $+1$. We show this below, using the crucial fact that $a_k = 1$ by definition of $k$.

\begin{equation*}
    \lambda = (-1)^{a \left( b \oplus \widetilde{\left( ab^T \right)} e_k \right)^T} = (-1)^{\widetilde{\left( ab^T \right)} + \widetilde{\left( ab^T \right)} a e_k^T} = (-1)^{\widetilde{\left( ab^T \right)} + \widetilde{\left( ab^T \right)} a_k} = (-1)^{2 \widetilde{\left( ab^T \right)}} = +1
\end{equation*}

So, by calculating a particular product of generators for the post-measurement stabilizer group, we have shown that $(-1)^{ab^T} E([a,a,...,a], [b,b,0,...,0]) \in \mathcal{G}'$. By multiplying this operator by one last generator of $\mathcal{G}'$, namely the measured operator (with the obtained sign), $\varepsilon E([a, 0, ..., 0], [b,0,...,0])$, we complete the proof.

\begin{align*}
    \varepsilon E([a,0,...,0],[b,0,...,0]) \cdot AB &= \varepsilon (-1)^{ab^T} E([a,0,...,0], [b,0,...,0]) \cdot E([a,a,...,a], [b,b,0,....,0]) \\
    &= \varepsilon (-1)^{ab^T} \cdot i^{ab^T - ab^T} \cdot E([2a,a,a,...,a], [2b, b, 0, 0, ..., 0]) \\
    &= \varepsilon (-1)^{ab^T} E([0, a, a, ..., a], [0, b, 0, 0, ..., 0]) \in \mathcal{G}'
\end{align*}

\end{proof}

\subsection{Examples}

For concreteness in the following examples, we give in \cref{eqn:ghz_stab_33} the full generating set for $n=3$ copies of a GHZ state shared by Alice, Bob, and Charlie.

\begin{equation}
    \label{eqn:ghz_stab_33}
    \begin{split}
        \mathcal{G}_{3}^{\copymajorsum 3} = \langle \ & Z_{A1} Z_{B1}, Z_{B1} Z_{C1}, X_{A1} X_{B1} X_{C1}, \\
        &Z_{A2} Z_{B2}, Z_{B2} Z_{C2}, X_{A2} X_{B2} X_{C2}, \\
        &Z_{A3} Z_{B3}, Z_{B3} Z_{C3}, X_{A3} X_{B3} X_{C3} \rangle
    \end{split}
\end{equation}

\begin{example}

Suppose Alice measures the local operator $E(a,b)_A = E([1,0,0], [0,1,0])_A = X_{A1} Z_{A2}$ and obtains the result $\varepsilon = \pm 1$. We expect, then, that Bob and Charlie's qubits should be projected into the $\varepsilon$ eigenspace of $E([a,a], [b,0]) = (X_{B1} Z_{B2}) (X_{C1})$.

\end{example}

The measured operator anticommutes with $Z_{A1} Z_{B1}$ and $X_{A2} X_{B2} X_{C2}$, while it commutes with all of the other generators. Applying the update rules, then, we have

\begin{equation*}
    \begin{split}
        &Z_{A1} Z_{B1} \mapsto \varepsilon X_{A1} Z_{A2} \\
        &X_{A2} X_{B2} X_{C2} \mapsto X_{A2} X_{B2} X_{C2} \cdot (Z_{A1} Z_{B1}) = (Z_{A1} X_{A2}) (Z_{B1} X_{B2}) (X_{C2}).
    \end{split}
\end{equation*}

So the full set of generators for the updated stabilizer group $\mathcal{G}'$ is

\begin{equation*}
\begin{split}
    \mathcal{G}' =  \langle \ & \varepsilon X_{A1} Z_{A2}, (Z_{A1} X_{A2}) (Z_{B1} X_{B2}) (X_{C2}),  \\
        &Z_{B1} Z_{C1}, X_{A1} X_{B1} X_{C1}, Z_{A2} Z_{B2}, Z_{B2} Z_{C2},\\
        &Z_{A3} Z_{B3}, Z_{B3} Z_{C3}, X_{A3} X_{B3} X_{C3} \rangle
\end{split}
\end{equation*}

The operator corresponding to the induced projection, $\varepsilon E([1,0,0,1,0,0], [0,1,0,0,0,0])_{BC} = \varepsilon (X_{B1} Z_{B2}) (X_{C1})$, is readily expressed as a product of these generators, which shows it is in the stabilizer group for the state after $X_{A1} Z_{A2}$ is measured. This in turn implies that the post-measurement state is in the $+1$ eigenspace of $\varepsilon (X_{B1} Z_{B2}) (X_{C1})$, which is the desired result.

\begin{equation*}
        (\varepsilon X_{A1} Z_{A2}) \cdot (X_{A1} X_{B1} X_{C1}) \cdot (Z_{A2} Z_{B2}) = \varepsilon (X_{B1} Z_{B2}) (X_{C1}) \in \mathcal{G}'
\end{equation*}

\begin{example}

In this example we demonstrate that the stabilizer interpretation is consistent with the sign factor $(-1)^{ab^T}$. Consider again the $N=n=3$ GHZ state specified by $\mathcal{G}_3^{\copymajorsum 3}$. This time, Alice measures $E(a,b)_A = E([1,0,0], [1,0,0])_A = Y_{A1}$ and obtains the outcome $\varepsilon$. Now, since $ab^T = 1$, we expect Bob and Charlie's qubits to lie in the $ - \varepsilon$ eigenspace of $E([a,a], [b,0])_{BC} = Y_{B1} X_{C1}$.

\end{example}

The original generators $Z_{A1} Z_{B1}$ and $X_{A1} X_{B1} X_{C1}$ anticommute with $Y_{A1}$.

\begin{equation*}
    \begin{split}
        &Z_{A1} Z_{B1} \mapsto \varepsilon Y_{A1} \\
        &X_{A1} X_{B1} X_{C1} \mapsto X_{A1} X_{B1} X_{C1} \cdot (Z_{A1} Z_{B1}) = (iY_{A1}) (iY_{B1}) (X_{C1}) = - (Y_{A1}) (Y_{B1}) (X_{C1}).
    \end{split}
\end{equation*}

So the post-measurement stabilizer group is

\begin{equation*}
    \begin{split}
        \mathcal{G}' = \langle \ & \varepsilon Y_{A1}, - Y_{A1} Y_{B1} X_{C1}, Z_{B1} Z_{C1}, \\
        &Z_{A2} Z_{B2}, Z_{B2} Z_{C2}, X_{A2} X_{B2} X_{C2}, \\
        &Z_{A3} Z_{B3}, Z_{B3} Z_{C3}, X_{A3} X_{B3} X_{C3} \rangle.
    \end{split}
\end{equation*}

Now we show that $ - \varepsilon Y_{B1} X_{C1} \in \mathcal{G}'$ by expressing it as a product of the updated generators, which proves that the post-measurement state lies in the $ - \varepsilon$ eigenspace of $Y_{B1} X_{C1}$.

\begin{equation*}
    (\varepsilon Y_{A1}) \cdot (- Y_{A1} Y_{B1} X_{C1}) = - \varepsilon Y_{B1} X_{C1}  \in \mathcal{G}'
\end{equation*}

\section{Success Probability from Single-Party QEC}
\label{sec:fidelity_details}

Recall the following expressions from \cref{sec:fidelity} involving the overall logical operation $L$ applied by the correction step.

\begin{equation*}
    L = \bigotimes_{i=1}^N E(a_i, b_i)_{A_i} = E([a_1, a_2, ..., a_N], [b_1, b_2, ..., b_N]) \quad \quad a,b \in \mathbb{F}_2^k
\end{equation*}

\begin{equation*}
    P(L) = \prod_{i=1}^N p_i (a_i, b_i).
\end{equation*}

\begin{equation*}
    p_{\text{succ}}^{(N)} = \sum_{L \in \mathcal{G}_N^{\copymajorsum k}} P(L) .
\end{equation*}

From the form of the state generators for the GHZ state, it can be checked that $L \in \mathcal{G}_N^{\copymajorsum k}$ if and only if $a_1 = a_2 = ... = a_N$ and $b_1 \oplus b_2 \oplus ... \oplus b_N = 0$. Therefore

\begin{equation}
\label{eqn:p_succ_direct}
    p_{\text{succ}}^{(N)} = \sum_{a} \sum_{\substack{b_1, ..., b_N s.t. \\ b_1 \oplus ... \oplus b_N = 0}} \prod_{i=1}^N p_i (a, b_i)
\end{equation}

To make further progress, we introduce a variable that evaluates to $1$ whenever $b_1 \oplus b_2 \oplus ... \oplus b_N = 0$, and $0$ otherwise, which we call $\mathbf{1}[b_1 \oplus b_2 \oplus ... \oplus b_N = 0]$.

\begin{equation*}
    \begin{split}
        \mathbf{1}[b_1 \oplus b_2 \oplus ... \oplus b_N = 0] &= \frac{1}{2^k} \sum_{s \in \mathbb{F}_2^k} (-1)^{s (b_1 \oplus b_2 \oplus ... \oplus b_N)^T} \\
        &= \frac{1}{2^k} \sum_{s \in \mathbb{F}_2^k} \prod_{i=1}^N (-1)^{s b_i^T}
    \end{split}
\end{equation*}

So 

\begin{equation*}
\begin{split}
    p_{\text{succ}}^{(N)} &= \sum_{a} \sum_{b_1, ..., b_N}  \mathbf{1}[b_1 \oplus b_2 \oplus ... \oplus b_N = 0] \prod_{i=1}^N p_i (a, b_i) \\
    &= \frac{1}{2^k} \sum_{a} \sum_{b_1, ..., b_N} \sum_s \left( \prod_{i=1}^N (-1)^{s b_i^T} \right) \left( \prod_{i=1}^N p_i (a, b_i)  \right) \\
    &=  \frac{1}{2^k} \sum_{a} \sum_s \sum_{b_1, ..., b_N} \left( \prod_{i=1}^N (-1)^{s b_i^T} p_i (a, b_i) \right) \\
    &= \frac{1}{2^k} \sum_a \sum_s \prod_{i=1}^N \left( \sum_b (-1)^{s b^T} p_i (a,b) \right)
\end{split}
\end{equation*}

To condense notation slightly, we define the quantity

\begin{equation*}
    q_i (a, s) := \sum_b (-1)^{s b^T} p_i (a, b),
\end{equation*}

so that

\begin{equation*}
    p_{\text{succ}}^{(N)} = \frac{1}{2^k} \sum_a \sum_s \prod_{i=1}^N q_i (a, s) .
\end{equation*}

When the $p_i$ are identical across all parties $A_i$, then this form reduces to \cref{eqn:fidelity_equal}. When there are $M \leq N$ distinct logical error probability distributions, we must separately calculate all $M$ corresponding $q_i$. However, this just requires an $\mathcal{O}(N)$ overhead in the worst case, whereas the cost of naively iterating through all of the $b_i$ in \cref{eqn:p_succ_direct} is exponential in $N$. The exponential cost in $k$ is unavoidable as the number of possible logical errors grows as $4^k$. However, at low physical error rates, high weight logical errors are suppressed, and we save overhead by only iterating through those $E(a,b)$ logical errors that are observed in the simulations.

\end{document}